\newcommand*{\affaddr}[1]{#1} 
\newcommand*{\affmark}[1][*]{\textsuperscript{#1}}
\tikzstyle{block} = [draw,rectangle,thick,minimum height=2em,minimum width=2em]
\tikzstyle{connector} = [->,thick]
\numberwithin{equation}{section}
\theoremstyle{plain}
  \newtheorem{theorem}{Theorem}[section]
  \newtheorem{corollary}[theorem]{Corollary}
\theoremstyle{definition}
  \newtheorem{example}[theorem]{Example}
  \newtheorem{remark}[theorem]{Remark}
\newcommand{\R}{\mathbb{R}}
\newcommand{\N}{\mathbb{N}}
\newcommand{\Var}{\mathrm{Var}}
\newcommand{\Cov}{\mathrm{Cov}}
\newcommand{\E}{\mathbb{E}}
\newcommand{\eps}{\varepsilon}
\title{Sensitivity analysis based dimension reduction of\\multiscale models}
\author{
Anna Nikishova\affmark[a], Giovanni Eugenio Comi\affmark[b], Alfons G. Hoekstra\affmark[a]\\
\small \affaddr{\affmark[a]Computational Science Lab, Institute for Informatics, Faculty of Science,\\ \small University of Amsterdam, The Netherlands}\\
\small \affaddr{\affmark[b]Scuola Normale Superiore di Pisa, Italy}
}
\begin{document}

\maketitle



\begin{abstract}
In this paper, the sensitivity analysis of a single scale model is employed in order to reduce the input dimensionality of the related multiscale model, in this way, improving the efficiency of its uncertainty estimation. The approach is illustrated with two examples: a reaction model and the standard Ornstein-Uhlenbeck process. Additionally, a counterexample shows that an uncertain input should not be excluded from uncertainty quantification without estimating the response sensitivity to this parameter. In particular, an analysis of the function defining the relation between single scale components is required to understand whether single scale sensitivity analysis can be used to reduce the dimensionality of the overall multiscale model input space.
\end{abstract}

\section{Introduction}
Results of computational models should be supported by uncertainty estimates whenever precise values of their inputs are not available \cite{ROY20112131,UUSITALO201524,Soize_2017}. This is usually the case since measurements of inputs rarely can be made exactly, or inputs may include aleatory uncertainty \cite{OBERKAMPF2002333,URBINA20111114}. Uncertainty Quantification (UQ) of a complex model usually requires powerful computational resources. Moreover, the cost of some UQ methods increases exponentially with the number of uncertain inputs. 

Sensitivity analysis (SA) identifies the effects of uncertainty in a model input or group of inputs to the model response. In 1990, Sobol introduced sensitivity indices to measure the effect of input uncertainty on the model output variance \cite{Sob90, Sob107}. In \cite{sobol2001global,SOBOL2007957}, Sobol employs SA in order to fix uncertain parameters with low total sensitivity indices and reduce the model dimensionality. 

Here such application of SA to multiscale models is considered. A multiscale model is defined as a collection of single scale models that are coupled using scale bridging methods. The approach proposed here consists in examining the type of function coupling the single scale components, followed by estimating the sensitivity of the response of a single scale model. This paper demonstrates that estimates of the single scale model sensitivity can be used to assess the sensitivity of the overall multiscale model response for some classes of multiscale model functions. However, this is not always possible, as will be shown by a counterexample.

Sobol's variance based approach is the preferred method to measure model output sensitivity \cite{SOBOL20093009,KUCHERENKO2009,saltelli2010avoid,SALTELLI201929}. Even though it is important to note that variance is not always the most representative measure of model response uncertainty \cite{BORGONOVO2016869,KUCHERENKO201935}, it is assumed to be so in this work. The proposed approach is based on exploring the coupled structure of multiscale models, allowing to analyse independently the single scale models. Therefore, the second assumption is that SA can be performed on the multiscale model components. Additionally, it is assumed that the multiscale model parameters are uncorrelated.

In Section \ref{sec:MSmodels}, a brief description of multiscale models is given. Section~\ref{sec:sa} is devoted to SA, and its application to dimensionality reduction of a multiscale model is discussed in subsection~\ref{sec:SA}. Together with some examples of the sensitivity analysis for multiscale models (subsections~\ref{sec:case_1} and~\ref{sec:case_2}), a counterexample is considered in subsection~\ref{sec:Counterexamples} in order to illustrate that, even though it is tempting to employ the SA result of single scale models to the response of the overall multiscale model, this is not always allowed. Section~\ref{sec:Conclusions} summarizes the results and includes a note on the application of the proposed approaches to some real-world models. Some other cases of multiscale models for which the proposed method on dimension reduction can be applied are in the Appendix. In particular, in the \ref{sec:general_res} an upper bound for the sensitivity of model output for a general class of coupling function is obtained.

\section{Multiscale model}\label{sec:MSmodels}

Following the concept introduced in the Multiscale Modelling and Simulation Framework (MMSF) \cite{Borgdorff_2013,borgdorff2014performance,Chopard_Falcone_Kunzli_Veen_Hoekstra_2018}, multiscale models are considered as a set of single scale models coupled using scale bridging methods. 
The single scale models represent processes that operate on well defined spatio-temporal scales. In MMSF, the single scale models are placed on a scale separation map (SSM), where axes indicate the spatial-temporal scales. An example of SSM with a multiscale model that consists of two single scale components is shown in Figure~\ref{fig:MSmodel}. The directed edges between the single scale components indicate their interactions. In general, cyclic and acyclic coupling topologies are recognised: the cyclic one, as in Figure~\ref{fig:MSmodel}, assumes a feedback loop between the components, and in the acyclic one, no feedback is present. Here we rely on the assumptions of a component-based structure of the multiscale models as well as on a drastic difference in the computational cost of the single scale components.

\begin{figure}
\centering
\footnotesize
\tikzstyle{vertex}=[draw, minimum width=40pt, minimum height=40pt, align=center]
\begin{tikzpicture}[transform shape]
\draw[gray, thick, ->] (0.75,-2.3) -- (0.75, 2.);
\draw[gray, thick, ->] (0.2,-2.) -- (6.2, -2.);
\node[rotate=90]  at (.5, 1.) {Spatial scale};
\node[]  at (5., -2.3) {Temporal scale};
\node(inputs_f) at (3, 1.) {\begin{tabular}{cc}$x$\end{tabular}};
\node(inputs_h) at (1.1, -1) {\begin{tabular}{cc}$\xi$\end{tabular}};
\node[vertex](f) at (4.5, 1.) {\begin{tabular}{cc}$G(f, h)$\end{tabular}};
\node[vertex](h) at (2.5, -1.) {\begin{tabular}{cc}$h$\end{tabular}};
\draw[-latex,bend right]  (f) edge (h);
\draw[-latex,bend right] (h) edge (f);
\node(QoI) at (6.6, 1.) {\begin{tabular}{cc}$z = g(x, \xi)$\end{tabular}};
\draw[->] (f) -- node [] {} (QoI);
\draw[->] (inputs_f) -- node [] {} (f);
\draw[->] (inputs_h) -- node [] {} (h);
\end{tikzpicture}
\caption{Scale separation map. The functions $G(f, \cdot)$ and $h$ are the macro and micro models with inputs $x$ and $\xi$, respectively. The function $G(f(x), h(\xi))$ defines the relation between the response of the micro model and the rest of the macro model parameters denoted by $f$. The final multiscale model output $z = g(x, \xi) = G(f(x), h(\xi))$ is produced by the macro model.}
\label{fig:MSmodel}
\end{figure}
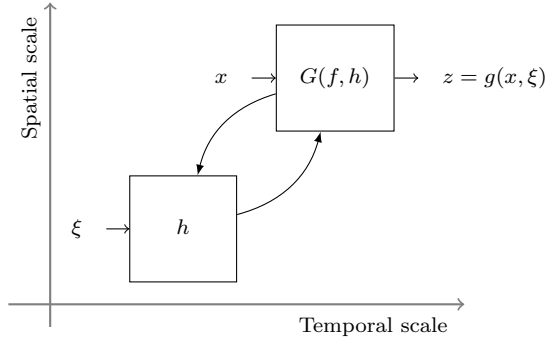

The overall multiscale model is denoted by a function $g(x, \xi) = z$ such that
$$g:\R^{n+m} \to \R^q$$
with $n, m, q \in \N$ and $\E[|g|^2] < \infty$, which produces the Quantity of Interest (QoI) $z$. We introduce a function $G: \R^{s+p} \to \R^q$, with $s, p \in \N$, as a representation of $g$, which underlines the relationship between the micro model response and the remaining variables inside the macro model, denoted by the function $f$: 
\begin{equation*}
    g(x, \xi) = G(f(x), h(\xi)).
\end{equation*}
Therefore, the function $G(f(x), \cdot)$ represents the macro model for some $f: \R^{n} \to \R^s$ which depends on parameters $x = (x_{1}, \dots, x_{n})$.
It is assumed that $f$ can be executed in a relatively short computational time, that it has a finite non-zero variance, i.e. $\E[|f|^2] < \infty$ and $f$ is not constant, and that it is possible to obtain its output sensitivity.

The micro scale component is defined by a function $h: \R^{m} \to \R^p$ which satisfies $\E[|h|^2] < \infty$. The sets of variables on which the function $h$ depends\footnote{Additionally, $h$ may depend on the macro model response. When this is the case, the micro model function is denoted by $h(x,\xi)$ or $h(x)$, meaning that it depends on the same uncertain inputs as the macro model function $f$. This is a relevant feature of the method presented here.} are of the form $\xi = (\xi_{1}, \dots, \xi_{m})$.  

Without loss of generality, later in the text it is assumed that the uncertain inputs $x$ and $\xi$ follow uniform distributions $\mathcal{U}([0, 1]^n)$ and $\mathcal{U}([0, 1]^m)$, respectively. 

\section{Sensitivity analysis}\label{sec:sa}

Sensitivity analysis identifies the effect of uncertainty in the model input parameters on the model response \cite{Sal101}. The Sobol sensitivity indices \cite{Sob90,SOBOL20093009} (SIs) are widely used to measure the response sensitivity. The total SI of an input $x_i$ for the results of the multiscale model function $g(x, \xi)=z$ is given by
\begin{align}
\begin{split}
  \label{eq:STg}
  S^g_{T_{x_i}}  
  &= \frac{\Var(z) - \Var_{x_{\sim i}, \xi} \left(\mathbb{E}_{x_i}[z|x_{\sim i}, \xi] \right) }{\Var(z)} 
  \\ 
  &= \frac{\int |g(x, \xi)|^2 dx d\xi - \int |\int g(x, \xi) dx_i|^2 dx_{\sim i} d\xi} {\int |g(x, \xi)|^2 dx d\xi  -|g_0|^2},
\end{split} 
\end{align}
where $g_0 = \mathbb{E}[g(x, \xi)]$, and the notation $x_{\sim i} = (x_{1}, \dots, x_{i - 1}, x_{i + 1}, \dots, x_{n})$ is employed \cite{HOMMA19961}. In \cite{Sob90,SOBOL2007957}, the total SIs were employed to identifying the effective dimensions of a model function and to fixing unessential variables. In particular, it was shown that, when fixing $x_i$ to a value $x^0_i$ in $[0, 1]$, the error defined by
\begin{equation*}
    \delta(x^0_i) = \frac{\int \left|g(x, \xi) - g(x_{\sim i},x^0_i, \xi)\right|^2 dx d\xi}{\Var(z)}
\end{equation*}
satisfies 
\begin{equation}\label{eq:P_error}
P \left(\delta(x^0_i) < \left(1 + \frac{1}{\eps} \right)S^g_{T_{x_i}}\right) > 1 - \eps
\end{equation}
for any $\eps > 0$. This result is applied in this work, meaning that we expect with high confidence that fixing an input with a low total sensitivity index does not produce a large error in the estimates of uncertainty. Then, this fact can be employed to reduce input dimensionality, so that UQ can be performed more efficiently. However, sensitivity indices are usually not given in advance and their estimation can be a computationally expensive task as well. 

\subsection{Sensitivity analysis of multiscale models}\label{sec:SA}

In this work, it is proposed to evaluate the response sensitivity of the computationally cheap single scale model $f$ to estimate an upper bound of the sensitivity of the multiscale model output $z$. This approach can be highly computationally efficient; however, the method does not work in general.

In order to fix uncertain inputs according to single scale model SA, it should be proved that the total sensitivity for an input $x_i$ remains small also for the output of the model $g(x, \xi)$, i.e. $S^{g}_{T_{x_{i}}} \ll 1$ given that $S^{f}_{T_{x_{i}}} \ll 1$. This cannot be assumed in general, and it depends on the form of the model function $G$. 

The first step of the proposed approach is to analyse the multiscale model function $G$, as it is shown in the following sections. In the cases, in which our method applies, the next step is to estimate numerically $S^f_{T_{x_i}}$ for $i=1, \dots, n$ by a black box method, for instance from \cite{SOBOL2007957}. Then, if it is found that  $S^{f}_{T_{x_{i}}} \ll 1$, it shall follow automatically that $S^{g}_{T_{x_{i}}} \ll 1$. Hence, according to \eqref{eq:P_error}, uncertainty can be estimated with fixed $x_i$ without producing a large error.

While the results stated below hold also for vector valued functions, using the definition of total SI given in \eqref{eq:STg}, we shall work mainly with scalar functions, in order to avoid a heavy notation. 

\subsubsection{Case 1}\label{sec:case_1}
We start by considering the homogeneous case: $G: \R^2 \to \R$, given by $G(u, v) = uv$.

\begin{theorem} \label{thm:case_1} Let $g:(0, 1)^{m + n} \to \R$ be a function in $L^{2}((0, 1)^{m + n})$ such that
\begin{equation*}
  \label{eq:case1}
  g(x, \xi) = f(x)h(\xi),
\end{equation*}
for some $f : (0, 1)^{n} \to \R$ and $h: (0, 1)^{m} \to \R$ satisfying $f \in L^{2}((0, 1)^{n})$ and $h \in L^{2}((0, 1)^{m})$. Then, we have
\begin{equation} \label{eq:STf_case1}
  S^g_{T_{x_i}} = \lambda_{f, h} S^{f}_{T_{x_{i}}},  \end{equation}
where 
\begin{equation*} 
\label{eq:lambda_f_h} 
\lambda_{f, h} = \frac{\int f(x)^{2} \, dx - f_{0}^{2}}{\int f^2(x) dx- \frac{f^2_0 h^2_0}{\int h^2(\xi) d\xi}}. \end{equation*}
In particular,  
\begin{equation} \label{eq:STf_case1_upper} S^g_{T_{x_i}} \leq S^f_{T_{x_i}}, \end{equation} 
and 
\begin{equation} \label{eq:STf_case1_lower} S^{g}_{T_{x_{i}}} \ge \left ( 1 - \frac{f_{0}^{2}}{\int f^2(x) dx} \right ) S^{f}_{T_{x_{i}}}. \end{equation}
\end{theorem}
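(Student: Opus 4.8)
The plan is to exploit the product structure $g(x,\xi)=f(x)h(\xi)$ so that every integral in the definition \eqref{eq:STg} of $S^g_{T_{x_i}}$ factors into an $x$-integral times a $\xi$-integral. Write $A=\int f(x)^2\,dx$, $B=\int h(\xi)^2\,d\xi$, $f_0=\int f(x)\,dx$ and $h_0=\int h(\xi)\,d\xi$; these are all finite and nonnegative under the $L^2$ hypotheses, and $g\in L^2$ follows as well. By Fubini one gets $\int|g|^2\,dx\,d\xi=AB$, $g_0=f_0h_0$, and $\int g(x,\xi)\,dx_i=h(\xi)\int f(x)\,dx_i$, hence $\int\bigl|\int g\,dx_i\bigr|^2\,dx_{\sim i}\,d\xi=B\int\bigl(\int f\,dx_i\bigr)^2\,dx_{\sim i}$. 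Substituting into \eqref{eq:STg},
\begin{equation*}
  S^g_{T_{x_i}}=\frac{B\Bigl(A-\int\bigl(\int f\,dx_i\bigr)^2\,dx_{\sim i}\Bigr)}{AB-f_0^2h_0^2}.
\end{equation*}

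Next I would recognise, by applying \eqref{eq:STg} to $f$ alone, that $A-\int\bigl(\int f\,dx_i\bigr)^2\,dx_{\sim i}=(A-f_0^2)\,S^f_{T_{x_i}}$, where $A-f_0^2=\Var(f)>0$ because $f$ is assumed non-constant. Plugging this in and cancelling $B$ from numerator and denominator yields exactly \eqref{eq:STf_case1} with $\lambda_{f,h}=(A-f_0^2)\big/\bigl(A-f_0^2h_0^2/B\bigr)$. Here one should also observe that the denominator $AB-f_0^2h_0^2=\Var(z)$ is strictly positive — it is at least $AB-A\cdot B\cdot(h_0^2/B)=A(B-h_0^2)$ is not quite the cleanest route; more simply, $f_0^2\le A$ and $h_0^2\le B$ with at least the first inequality strict, so $f_0^2h_0^2<AB$ — which makes both $S^g_{T_{x_i}}$ and $\lambda_{f,h}$ well defined.

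It then remains to bound $\lambda_{f,h}$. For the upper bound \eqref{eq:STf_case1_upper}, Jensen's inequality gives $h_0^2\le B$, hence $f_0^2h_0^2/B\le f_0^2$, so the denominator of $\lambda_{f,h}$ is at least $A-f_0^2>0$ and therefore $\lambda_{f,h}\le 1$. For the lower bound \eqref{eq:STf_case1_lower}, since $f_0^2h_0^2/B\ge 0$ the denominator is at most $A$, so $\lambda_{f,h}\ge(A-f_0^2)/A=1-f_0^2/A$; multiplying through by $S^f_{T_{x_i}}\ge 0$ gives the stated inequality.

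I do not anticipate a genuine obstacle: once separability is invoked the argument is a direct computation. The only points needing care are the repeated use of Fubini (justified since $f,h\in L^2$ on sets of finite measure, so all integrals converge absolutely), checking that $\Var(z)$ and the denominator of $\lambda_{f,h}$ do not vanish — which is exactly where the standing assumption that $f$ is non-constant is used — and noting that the degenerate sub-case $f_0=0$ is harmless, since then $\lambda_{f,h}=1$ outright and \eqref{eq:STf_case1}–\eqref{eq:STf_case1_lower} hold trivially.
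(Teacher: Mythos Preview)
Your proposal is correct and follows essentially the same route as the paper: factor the integrals via the product structure, identify $\lambda_{f,h}$, and bound it above and below using $h_0^2\le\int h^2\,d\xi$ (what you call Jensen, the paper calls Cauchy--Schwarz) and the trivial bound $f_0^2h_0^2/B\ge 0$. Your extra care about Fubini and the non-vanishing of $\Var(z)$ is welcome but not something the paper spells out.
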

\begin{proof}
The total SI of the input $x_i$ for the results of the model $g(x, \xi)$ is equal to
\begin{align*}
\begin{split}
  S^g_{T_{x_i}}  &= \frac{\int \int f^2(x) h^2(\xi) dx d\xi - \int \int (\int f(x) h(\xi) dx_i )^2 dx_{\sim i} d\xi} {\int \int f^2(x) h^2(\xi) dx d\xi - (f_0 h_{0})^{2}} \\ 
  &= \frac{\int f^2(x) dx - \int (\int f(x) dx_i )^2 dx_{\sim i}} {\int f^2(x) dx- \frac{f^2_0 h^2_0}{\int h^2(\xi) d\xi}}\\
  &= \frac{\int f(x)^{2} \, dx - f_{0}^{2}}{\int f^2(x) dx- \frac{f^2_0 h^2_0}{\int h^2(\xi) d\xi}} 
  \frac{\int f^2(x) dx - \int (\int f(x) dx_i )^2 dx_{\sim i}}{\int f^2(x) dx- f^2_0}\\
  &=  \frac{\int f(x)^{2} \, dx - f_{0}^{2}}{\int f^2(x) dx- \frac{f^2_0 h^2_0}{\int h^2(\xi) d\xi}} S^{f}_{T_{x_{i}}},
\end{split} 
\end{align*}
from which \eqref{eq:STf_case1} follows.

By the Cauchy-Schwarz inequality, $$h^2_0 \leq \int h^2(\xi) d\xi.$$ Therefore, $\lambda_{f, h} \le 1$, and \eqref{eq:STf_case1_upper} is obtained. 
In addition, again by Cauchy-Schwarz inequality, we get
\begin{equation*} \lambda_{f, h} \ge \frac{\int f(x)^{2} \, dx - f_{0}^{2}}{\int f^2(x) dx} = 1 - \frac{f_{0}^{2}}{\int f^2(x) dx} > 0 \end{equation*}
for any $h \in L^{2}((0, 1)^{m})$. Hence, \eqref{eq:STf_case1_lower} is obtained. 
\end{proof}

Therefore, if a low sensitivity to the parameter $x_i$ is identified by computing $S^f_{T_{x_i}}$, this parameter can be excluded from UQ of the whole multiscale model. On the other hand, inequality \eqref{eq:STf_case1_lower} means that we have a lower bound for the total SI of the input $x_{i}$ for the model $g(x, \xi) = f(x) h(\xi)$, which is independent from the choice of the function $h(\xi)$. In particular, if $x_{i}$ is an important variable for the model $f(x)$, then \eqref{eq:STf_case1_lower} implies that it cannot loose dramatically its importance in the model given by $g$.

\begin{example}[Reaction equation]

An example of Case 1 can be a reaction equation presented by an acyclic model \cite{chopard2014framework} with initial conditions provided by some function $f(x)$:
\begin{align*}
\begin{split}
\frac{\partial z(t, x, \xi)}{\partial t} &= -\psi(\xi)z(t, x, \xi), \\
z(0, x, \xi) &= f(x),
\end{split} 
\end{align*}
where $x$ and $\xi$ are uncertain model inputs. The analytical solution of the equation is
$$z(t, x, \xi) = f(x) e^{-t\psi(\xi)}.$$
Therefore, if we define $h_t(\xi) = e^{-t\psi(\xi)}$, we get 
$$z(t, x, \xi) = f(x) h_t(\xi),$$
and Theorem~\ref{thm:case_1} can be applied.

Since the proposed approach is applicable to multiscale models regardless of the complexity of $f$ and $h$, in the example, these model components are represented by the following equations:
\begin{align*}
\begin{split}
\psi(\xi) &= \xi_1^2 - \xi_2,\\
f(x) &= x_1^2 + x_1x_2x_3 + x_3^3 - x_1x_3,
\end{split} 
\end{align*}
where uncertain parameters $x$ have uniform distribution $\mathcal{U}(0.9, 1.1)$, $\xi_1$ is uniformly distributed on $[0.07, 0.09]$, and $\xi_2$ on $[0.05, 0.09]$.

Sensitivity analysis of the function $f$ results in:
\begin{align*}
\begin{split}
S^f_{T_{x_1}} &\approx 2.9 \cdot 10^{-1},\\
S^f_{T_{x_2}} &\approx 7.2 \cdot 10^{-2},\\
S^f_{T_{x_3}} &\approx 6.5 \cdot 10^{-1},
\end{split} 
\end{align*}
suggesting that the parameter $x_2$ does not significantly affect the output of the function $f$. Therefore, by Theorem~\ref{thm:case_1}, the value of this parameter can be equated to its mean when estimating uncertainty of the overall model response $z$.

Figure~\ref{fig:comparison_UQ_case1}~(a) illustrates a satisfactory match between the mean values and standard deviations obtained by sampling the results varying all the uncertain inputs and keeping the input $x_2$ equal to its mean value. Figure~\ref{fig:comparison_UQ_case1}~(c) shows that the relative error in the standard deviation does not exceed 3.5\% at any simulation time. Moreover, the resulting $p$-value of Levene's test \cite{LIM1996287} is about 0.84. Therefore, the null hypothesis that the samples are obtained from distributions with equal variances cannot be rejected.

\begin{figure}
  \centering
  \includegraphics[width=\linewidth]{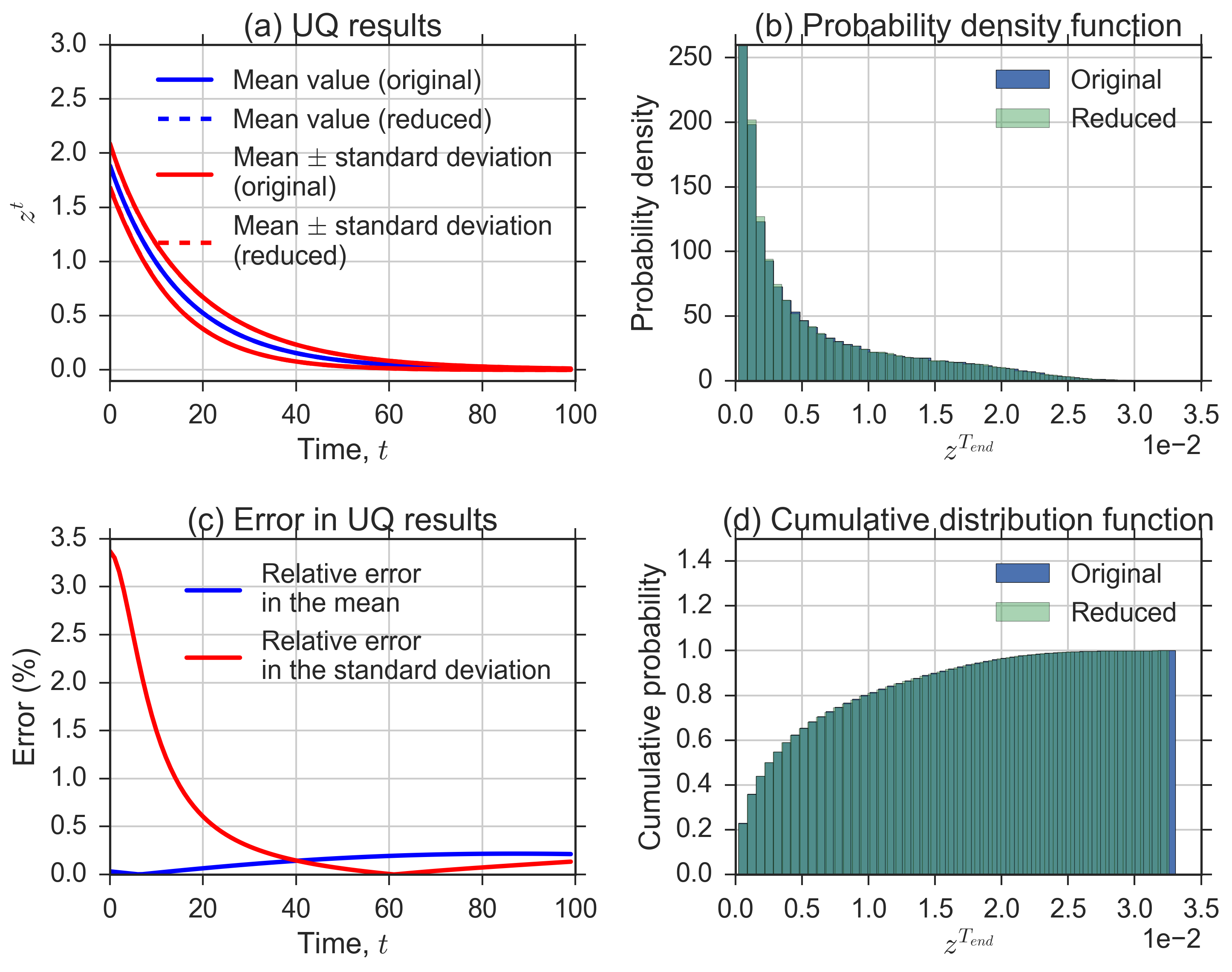}
  \caption{(a) Comparison of the estimated mean and standard deviation of the model response $z^t$ using the original sample and the sample with the unimportant parameter $x_2$ equal to its mean value (reduced); (b) and (d) Comparison of the probability density functions and the cumulative distribution functions at the final simulation time $T_{end}=100$; (c) Relative error in the estimated mean and standard deviation using the samples with the reduced number of uncertain input.}
  \label{fig:comparison_UQ_case1}
\end{figure}

Figures~\ref{fig:comparison_UQ_case1}~(b) and (d) show the probability density functions (PDFs) and the cumulative distribution functions (CDFs) of the uncertain model output $z$ at the final simulation time obtained using these two samples. There is a good match in the PDFs and CDFs with Kolmogorov–Smirnov (K-S) two sample test shows the K-S distance nearly $3.6\cdot10^{-4}$ and $p$-value larger than $0.5$, therefore, the hypothesis that the two samples are drawn from the same distributions cannot be rejected\footnote{This conclusion also applies to the other simulation times (data not shown).}.
\end{example}

\subsubsection{Case 2}\label{sec:case_2}
We consider the linear case, where the sampling function $G:\R^{2} \to \R$ is given by $G(u, v) = u + v$.

\begin{theorem} \label{thm:case_2} Let $g:(0, 1)^{n + m} \to \R$ be a function in $L^{2}((0, 1)^{n + m})$ such that
\begin{equation*}
  \label{eq:case3}
  g(x, \xi) = f(x) + h(\xi),
\end{equation*}
for some $f:(0, 1)^{n} \to \R$ and $h: (0, 1)^{m} \to \R$ satisfying $f \in L^{2}((0, 1)^{n})$ and $h \in L^{2}((0, 1)^{m})$. Then, we have
\begin{equation} \label{eq:STf_case2} S^g_{T_{x_i}} = \mu_{f, h} S_{T_{x_{i}}}^{f}, \end{equation} 
where
\begin{equation*} \mu_{f, h} := \frac{1}{1 + \frac{\Var(h)}{\Var(f)}}. \end{equation*} 
In particular, $S^{g}_{T_{x_{i}}} \le S_{T_{x_{i}}}^{f}$. 
\end{theorem}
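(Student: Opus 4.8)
The plan is to follow the same template as the proof of Theorem~\ref{thm:case_1}, exploiting the additive structure $g(x,\xi)=f(x)+h(\xi)$ together with the independence of the blocks $x$ and $\xi$. The single crucial fact is that, because $x$ and $\xi$ are independent, every variance appearing in the definition \eqref{eq:STg} splits as a sum of a term depending only on $f$ and a term depending only on $h$, and these $h$-contributions will cancel in the numerator of $S^g_{T_{x_i}}$ while surviving in the denominator.

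First I would compute the denominator. Since $\E[g]=f_0+h_0$, independence of $x$ and $\xi$ makes the cross term vanish, so $\Var(z)=\Var(f)+\Var(h)$. Next, for the numerator, note $\E_{x_i}[z\mid x_{\sim i},\xi]=\E_{x_i}[f(x)\mid x_{\sim i}]+h(\xi)$; the first summand is a function of $x_{\sim i}$ alone and the second a function of $\xi$ alone, and $x_{\sim i}$ is independent of $\xi$, hence $\Var_{x_{\sim i},\xi}\big(\E_{x_i}[z\mid x_{\sim i},\xi]\big)=\Var_{x_{\sim i}}\big(\E_{x_i}[f\mid x_{\sim i}]\big)+\Var(h)$. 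Subtracting, the $\Var(h)$ terms cancel and the numerator of $S^g_{T_{x_i}}$ reduces to $\Var(f)-\Var_{x_{\sim i}}\big(\E_{x_i}[f\mid x_{\sim i}]\big)$, which is exactly the numerator of $S^f_{T_{x_i}}$, i.e. it equals $S^f_{T_{x_i}}\,\Var(f)$.

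Dividing numerator by denominator then yields
\[
S^g_{T_{x_i}}=\frac{\Var(f)}{\Var(f)+\Var(h)}\,S^f_{T_{x_i}}=\frac{1}{1+\Var(h)/\Var(f)}\,S^f_{T_{x_i}}=\mu_{f,h}\,S^f_{T_{x_i}},
\]
which is \eqref{eq:STf_case2}; here $\Var(f)>0$ by the standing assumption that $f$ is non-constant, so $\mu_{f,h}$ is well defined (and if $\Var(f)=0$ then $S^f_{T_{x_i}}$ itself is undefined, so the statement is vacuous). Since $\Var(h)\ge 0$, we get $0<\mu_{f,h}\le 1$, and the claimed bound $S^g_{T_{x_i}}\le S^f_{T_{x_i}}$ is immediate.

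I do not expect a real obstacle here: the argument is essentially the linearity of conditional expectation plus additivity of variances under independence. The only point requiring care is the bookkeeping of cross terms — this is exactly where the assumption that the parameters $x$ and $\xi$ are uncorrelated (indeed independent, given the uniform product structure) is used, and it is the analogue of the Cauchy–Schwarz step in Theorem~\ref{thm:case_1}. If one wishes to stay with the explicit integral notation of \eqref{eq:STg}, the same cancellations can be reproduced by expanding $\big(f(x)+h(\xi)\big)^2$ and using the factorisations $\int f(x)\,dx\cdot\int h(\xi)\,d\xi$ for the mixed integrals; the resulting computation is routine.
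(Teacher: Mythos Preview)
Your argument is correct and is essentially the paper's own proof: both compute the numerator and denominator of $S^g_{T_{x_i}}$ directly, observe that the $h$-contribution cancels in the numerator and survives as $\Var(h)$ in the denominator, and then factor out $\Var(f)$ to obtain $\mu_{f,h}S^f_{T_{x_i}}$. The only difference is cosmetic---you phrase the computation in variance/conditional-expectation language, whereas the paper expands $(f+h)^2$ in the explicit integral form of \eqref{eq:STg}, exactly as you sketch in your final paragraph.
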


\begin{proof}
The total SI of the input $x_i$ for the results of the model $g$ is equal to
\begin{align*}
\begin{split}
  S^g_{T_{x_i}}  &= \frac{\int (f(x) + h(\xi))^2 dx d\xi - \int (\int f(x) + h(\xi) dx_i )^2 dx_{\sim i} d\xi} 
                         {\int (f(x) + h(\xi))^2 dx d\xi - (f_{0} + h_{0})^2} 
                         \\ 
                 &= \frac{\int f^2(x) dx - \int (\int f(x) dx_i )^2 dx_{\sim i}} 
                         {\int f^2(x) dx  - f_0^2 + \int h^2(\xi)d\xi- h_0^2},
\end{split} 
\end{align*}
from which we get \eqref{eq:STf_case2} by dividing by $\Var(f)$ numerator and denominator.

Clearly, $\mu_{f, h} \in (0, 1]$, and so we conclude that $S^{g}_{T_{x_{i}}} \le S_{T_{x_{i}}}^{f}$. 

\end{proof}

Therefore, if the parameter $x_i$ is unimportant for $f$, it can be equated to its mean value in the uncertainty estimation of the model $g$.

\begin{example}[Standard Ornstein-Uhlenbeck process]

An example of Case 2 can be a multiscale model whose micro scale dynamics does not depend on the macro scale response. Let us consider the system (Figure~\ref{fig:results_case2}~(a)) \cite{Weinan_2011,weinan2005analysis}:
\begin{align*}
\begin{split}
\frac{\partial z}{\partial t} &= v + f(x),\\
\frac{\partial v}{\partial t} &= -\frac{1}{\epsilon} v + \frac{1}{\sqrt{\epsilon}} \dot{W}_t, \\
f(x) &= - x_1 + (x_2^2 \, x_3 + x_4),
\end{split} 
\end{align*}
where $z$ simulates the slow processes with $z(t=0) = 1$, $v$ is the fast process with $v(t=0)=1$, $\epsilon=10^{-2}$, $\dot{W}_t$ is a white noise with unite variance. The fast dynamics is the standard Ornstein-Uhlenbeck process. At any simulation time $t, \dot{W}_{t}$ plays the role of $\xi$ in Theorem~\ref{thm:case_2}. The macro model uncertain parameters $x=(x_1, x_2, x_3, x_4)$ follow normal distribution, such that
$x_1\sim \mathcal{N}(0, 10^{-4})$,
$x_2\sim \mathcal{N}(0, 2.5 \cdot 10^{-4})$, 
$x_3\sim \mathcal{N}(0, 2.5 \cdot 10^{-6})$,
$x_4\sim \mathcal{N}(0, 2.5 \cdot 10^{-6})$. 
The system is simulated using the forward Euler method with the macro time step $\Delta t_M = 1$ and the micro time step $\Delta t_{\mu} = 10^{-2}$. 

\begin{figure}
  \centering
  \includegraphics[width=\linewidth]{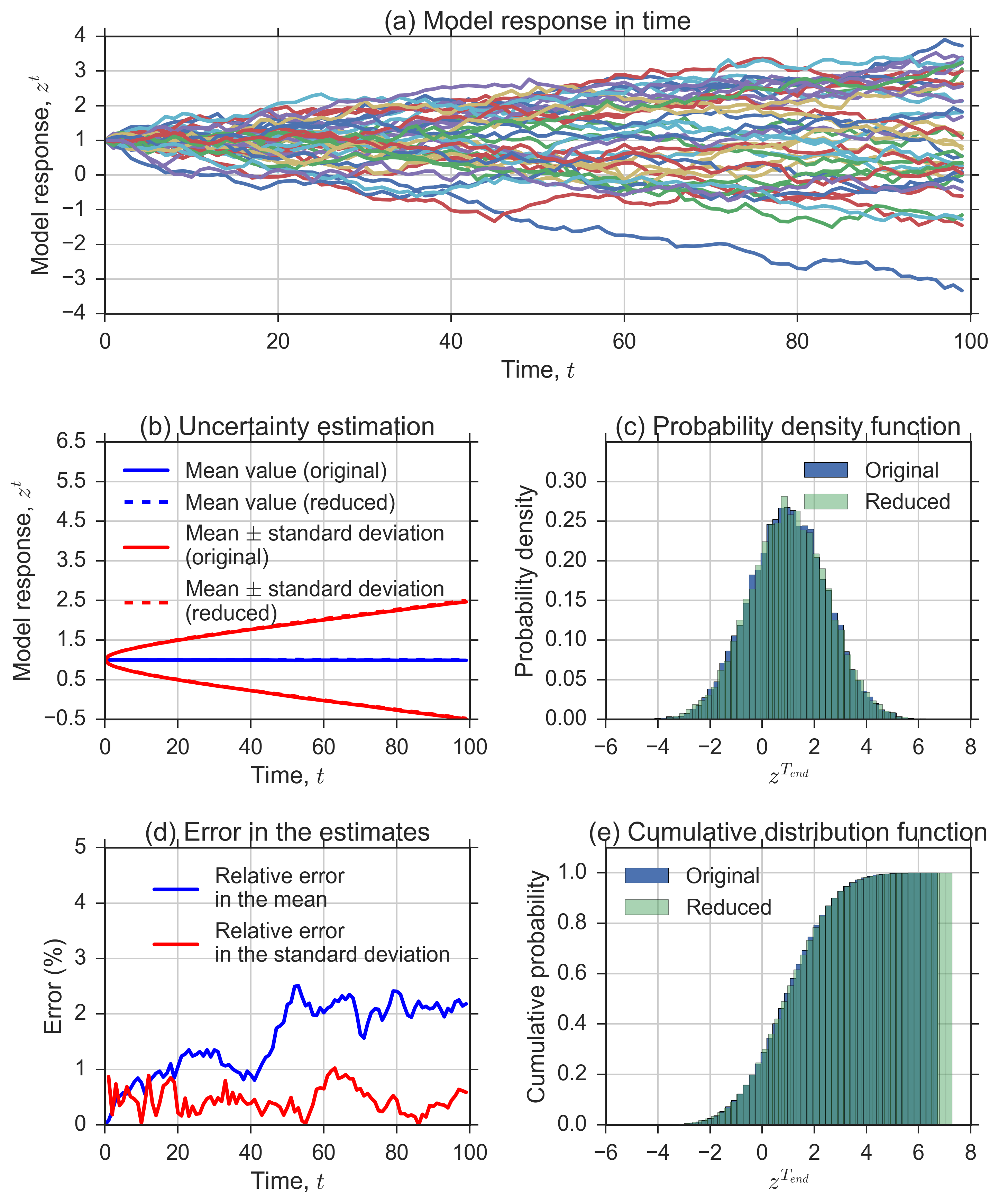}
  \caption{(a) Standard Ornstein-Uhlenbeck process; (b) Comparison of UQ result using the original sample and the sample obtained with values of the unimportant parameters $x_2$ and $x_3$ equal to their mean (reduced); (c) and (e) Comparison of the PDF and CDF at the final time step; (d) Relative error in the estimation of the mean and standard deviation.}
  \label{fig:results_case2}
\end{figure}

Sensitivity analysis of the function $f(x)$ yields
\begin{align*}
\begin{split}
S^f_{T_{x_1}} &\approx 7.7 \cdot 10^{-1},\\
S^f_{T_{x_2}} &\approx 2.6 \cdot 10^{-4},\\
S^f_{T_{x_3}} &\approx 3.9 \cdot 10^{-4},\\
S^f_{T_{x_4}} &\approx 2.0 \cdot 10^{-1}.
\end{split} 
\end{align*}
At any simulation time, the inputs $x_2$ and $x_3$ do not influence significantly the output of the function $f$. Therefore, they can be equated to their mean values without a substantial loss of accuracy of the uncertainty estimate as a consequence of Theorem~\ref{thm:case_2}.

The uncertainty estimation results of $z$ are presented in Figure~\ref{fig:results_case2}~(b). As it is proven analytically, the estimates obtained by sampling the model results with uncertain parameters $x_2$ and $x_3$ equal to their mean values are close to those resulting from samples where all the uncertain inputs vary. At any simulation time, the relative error between these estimates of the standard deviation does not exceed $1.1\%$ (Figure~\ref{fig:results_case2}~(d)). Additionally, Levene's test shows $p$-value about $0.66$, therefore, we cannot reject the hypothesis that the two samples are drawn from distributions with the same variance.

The PDFs and CDFs for the model result at the final time point obtained from these two samples are in Figure~\ref{fig:results_case2}~(c) and (e). There is a good match of the PDFs and CDFs obtained from these two samples, and K-S test produces the distance about $0.01$ and $p$-value about $0.47$, therefore, the hypothesis that the two samples are drawn from the same distributions cannot be rejected.
\end{example}

Some additional cases of the function $G$ for which the method of eliminating unimportant parameters to reduce the input dimensionality is valid are presented in the Appendix. 

\subsubsection{Counterexample}\label{sec:Counterexamples}

In this section, the importance of the examination of properties of the function $G$ is demonstrated. The counterexample illustrates that low sensitivity to a parameter of the response of a function $f$ does not necessarily imply low sensitivity to this parameter of a response of the function $g$. 

\begin{example}[Total sensitivity indices of composite functions]

Let $n = 2, m = 1$ and $i = 2$,
\begin{align}
\frac {\partial z}{\partial x_1} = -\frac{1}{4\sqrt[4]{\beta}} |x_1 + x_2 + \xi|^{-\frac{5}{4}},
\label{eq:counterexample_1}
\end{align}
for $(x_1, x_2, \xi) \in (0, 1)^3$, with $z(0, x_2, \xi) = \frac{1}{\sqrt[4]{\beta |x_2 + \xi|}}$ and $\beta > 0$ some fixed parameter. The solution to equation~\eqref{eq:counterexample_1} can be represented using the following system 
\begin{align*}
u=f(x) & = x_{1} + \beta x_{2},\\
v=h(x_{1}, \xi) & = (1 - \beta) x_{1} - \beta \xi, \\
G(u, v) &= \frac{1}{\sqrt[4]{|u - v|}},
\end{align*}
so that
\begin{equation*}
z(x, \xi) = g(x, \xi) = \frac{1}{\sqrt[4]{\beta}} \frac{1}{\sqrt[4]{x_{1} + x_{2} + \xi}}.
\end{equation*}
Let us now directly obtain sensitivity indices of the function $f(x)$ for the parameter $x_2$:
\begin{align*}
\begin{split}
S^{f}_{T_{x_{2}}} = \frac{\frac{1}{3} + \frac{\beta^{2}}{3} + \frac{\beta}{2} - \left( \frac{1}{3} + \frac{\beta^{2}}{4} + \frac{\beta}{2} \right)}{\frac{1}{3} + \frac{\beta^{2}}{3} + \frac{\beta}{2} - \left( \frac{\beta + 1}{2} \right)^2}
= \frac{\frac{\beta^{2}}{12}}{\frac{\beta^{2} + 1}{12}} = \frac{\beta^{2}}{1 + \beta^{2}}.
\end{split}
\end{align*}
Note that $S^{f}_{T_{x_{2}}}$ can be made arbitrarily small as $\beta \to 0$: for instance, by choosing $\beta \in \left (0, \frac{1}{10} \right )$, we get
\begin{equation*}
S^{f}_{T_{x_{2}}} < \frac{1}{100},
\end{equation*}
so that $x_{2}$ becomes an unimportant input for $f$.

On the other hand, sensitivity of the function $g(x, \xi)$ does not depend on $\beta$:
\begin{align*}
\begin{split}
S^{g}_{T_{x_{2}}} &= \frac
{\frac{1}{\sqrt{\beta}}\int_{(0, 1)^3} \frac{1}{
\sqrt{x_{1} + x_{2} + \xi}} dx_1 dx_2 d\xi-
\frac{1}{\sqrt{\beta}} \int_{(0, 1)^2} \left(\int_{(0, 1)} \frac{1}{\sqrt[4]{x_{1} + x_{2} + \xi}}dx_2\right)^2 dx_1d\xi}
{\frac{1}{\sqrt{\beta}}\int_{(0, 1)^3} \frac{1}{\sqrt{x_{1} + x_{2} + \xi}} dx_1 dx_2 d\xi-
\frac{1}{\sqrt{\beta}}\left( \int_{(0, 1)^3} \frac{1}{\sqrt[4]{x_{1} + x_{2} + \xi}} dx_1 dx_2 d\xi \right)^2}\\
&=\frac{\int_{(0, 1)^3} \frac{1}{\sqrt{x_{1} + x_{2} + \xi}} dx_1 dx_2 d\xi-
 \int_{(0, 1)^2} \left(\int_{(0, 1)} \frac{1}{\sqrt[4]{x_{1} + x_{2} + \xi}}dx_2\right)^2 dx_1d\xi}
{\int_{(0, 1)^3} \frac{1}{\sqrt{x_{1} + x_{2} + \xi}} dx_1 dx_2 d\xi-
\left( \int_{(0, 1)^3} \frac{1}{\sqrt[4]{x_{1} + x_{2} + \xi}} dx_1 dx_2 d\xi \right)^2}.
\end{split}
\end{align*}
In addition, since $g$ is symmetrical, 
\begin{equation*}
S^g_{T_{x_{2}}} = S^g_{T_{x_{1}}} = S^g_{T_{\xi}}.
\end{equation*} 
Hence, this proves that $x_{2}$ is not an unimportant input for the function $g$, since it must be as relevant as $x_{1}$ and $\xi$. Therefore, in general, it is wrong to eliminate an uncertain input from UQ only based on sensitivity analysis of a single scale model without verifying that $S^g_{T_{x_{i}}} \leq \lambda S^f_{T_{x_{i}}}$ holds for some finite $\lambda \ge 0$ as in Theorem~\ref{thm:case_1} and Theorem~\ref{thm:case_2}.
\end{example}

\section{Concluding remarks}\label{sec:Conclusions}

An application of sensitivity analysis to reduce dimensionality of multiscale models in order to improve the performance of their uncertainty estimation is discussed in this paper. It has been shown that for some multiscale models, the estimates of Sobol sensitivity indices of a single scale output can be used as an estimate of the upper bound for the sensitivity of the output of the whole multiscale model. In other words, knowledge on the importance of inputs from single scale models can be used to find the effective dimensionality of the overall multiscale model. Two classes of coupling function $G$ (multiplicative, additive) were considered, where the approach was demonstrated to work, based on Theorems~\ref{thm:case_1} and~\ref{thm:case_2}, and two examples. However, a counterexample was also constructed, showing that the success of the method strongly depends on the properties of the coupling function $G$. Obviously, this analysis only covers a very small portion of possible coupling functions, and a more systematic or case by case investigation would be warranted.

The next step is to apply the proposed approach to real-world multiscale applications, for instance, to a multiscale fusion model \cite{luk2019compat} and to a coupled human heart model \cite{Santiago2018}. Uncertainty quantification applied to these models is computationally expensive due to the high dimension of the model parameters. Therefore, the SA analysis on single scale models to reduce the dimensionality of the overall multiscale model input can be one of the possible ways to improve the efficiency of the model uncertainty quantification.

\section*{Funding}
This work is a part of the eMUSC (Enhancing Multiscale Computing with Sensitivity Analysis and Uncertainty Quantification) project. A. N. and A. H. gratefully acknowledge financial support from the Netherlands eScience Center. This project has received funding from the European Union Horizon 2020 research and innovation programme under grant agreement \#800925 (VECMA project). 


\section*{Declarations of interest: none}
None.

\appendix\label{sec:appendix}
\section*{Appendices}
\addcontentsline{toc}{section}{Appendices}
\renewcommand{\thesubsection}{\Alph{subsection}}

In this Appendix, additional cases of the function $G$ are considered. In particular, relations between the function $f$ and two or more functions representing the micro model are investigated, in this way allowing for vector valued functions $h$. Overall, our goal here is to show that the method presented in this work can be applied to different types of functions of the multiscale model components.

\section{Case 3} \label{sec:case_3}

Consider the affine linear case: $G: \R^{3} \to \R$ given by $G(u, v_{1}, v_{2}) = u v_{1} + v_{2}.$

\setcounter{theorem}{0}
\renewcommand{\thetheorem}{\Alph{section}\arabic{theorem}}
\begin{theorem} \label{thm:case_3} Let $g: (0, 1)^{m + n + k} \to \R$ be a function in $L^{2}((0, 1)^{m + n + k})$ such that
\begin{equation*}\label{eq:case4}
  g(x, \xi, \eta) = f(x)h_{1}(\xi) + h_{2}(x_{\sim i}, \eta),
\end{equation*}
for some $f: (0, 1)^{n} \to \R$, $h_{1}: (0, 1)^{m} \to \R$ and $h_{2}: (0, 1)^{k + n - 1} \to \R$ satisfying $f \in L^{2}((0, 1)^{n})$, $h_{1} \in L^{2}((0, 1)^{m})$, and $h_{2} \in L^{2}((0, 1)^{k + n - 1})$. Then,
\begin{equation} \label{eq:STf_case4_th} S^{g}_{T_{x_{i}}} = \gamma_{f, h_{1}, h_{2}} S^{f}_{T_{x_{i}}}, \end{equation}
where
\begin{equation*} \label{gamma_f_h_varphi_def} 
\gamma_{f, h_{1}, h_{2}} := \frac{\int f^2(x) dx - f_{0}^{2}} 
  {\int f^2(x) dx - \frac{f^2_0 (h_{1})^2_0}{\int h_{1}^2(\xi) d\xi}+ \frac{\int \int h_{2}^2(x_{\sim i}, \eta) dx_{\sim i} d\eta - (h_{2})_0^2}{\int h_{1}^2(\xi) d\xi} + 2 (h_{1})_0 \frac{ (f h_{2})_{0}-f_{0} (h_{2})_{0}}{\int h_{1}^2(\xi) d\xi}}. 
 \end{equation*}
If, additionally, it is assumed that
\begin{equation} \label{covariance_f_varphi} (h_{1})_{0}(f h_{2})_{0} \geq f_{0}(h_{1})_{0}(h_{2})_{0}, \end{equation} 
then $$S^g_{T_{x_i}} \leq S^f_{T_{x_i}}.$$ 
\end{theorem}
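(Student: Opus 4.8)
The plan is to mimic the proofs of Theorem~\ref{thm:case_1} and Theorem~\ref{thm:case_2}: substitute $g = f h_{1} + h_{2}$ into the definition \eqref{eq:STg} of $S^{g}_{T_{x_{i}}}$, expand the squares, and use the fact that $f$, $h_{1}$ and $h_{2}$ depend on disjoint blocks of variables — with the decisive feature that $h_{2}$ depends on neither $x_{i}$ nor $\xi$ — to factor the multidimensional integrals into products of lower-dimensional ones.

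First I would record the bookkeeping identities $g_{0} = f_{0}(h_{1})_{0} + (h_{2})_{0}$ and, setting $F(x_{\sim i}) := \int f(x)\, dx_{i}$,
\begin{equation*}
\int g\, dx_{i} = h_{1}(\xi) F(x_{\sim i}) + h_{2}(x_{\sim i}, \eta),
\end{equation*}
the latter because $h_{2}$ does not involve $x_{i}$. Expanding $\int |g|^{2}$ and $\int \big(\int g\, dx_{i}\big)^{2}$ into three terms each and integrating them separately, the point to check is that the mixed term coming from $\int g\, dx_{i}$ equals $2 (h_{1})_{0} \int F h_{2}\, dx_{\sim i} d\eta$, and since $h_{2}$ is independent of $x_{i}$ one has $\int F h_{2}\, dx_{\sim i} d\eta = \int f h_{2}\, dx\, d\eta = (f h_{2})_{0}$, which is exactly the mixed term appearing in $\int |g|^{2}$. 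Consequently the numerator of $S^{g}_{T_{x_{i}}}$ collapses to
\begin{equation*}
\Big(\int h_{1}^{2}(\xi)\, d\xi\Big)\Big(\int f^{2}(x)\, dx - \int F^{2}(x_{\sim i})\, dx_{\sim i}\Big),
\end{equation*}
i.e. $\big(\int h_{1}^{2}\, d\xi\big)$ times the numerator of $S^{f}_{T_{x_{i}}}$. The denominator of $S^{g}_{T_{x_{i}}}$ is $\Var(g) = \int |g|^{2} - g_{0}^{2}$, which after expansion is $\big(\int f^{2}\big)\big(\int h_{1}^{2}\big) - f_{0}^{2}(h_{1})_{0}^{2} + 2 (h_{1})_{0}(f h_{2})_{0} - 2 f_{0}(h_{1})_{0}(h_{2})_{0} + \int h_{2}^{2} - (h_{2})_{0}^{2}$. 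Dividing numerator and denominator by $\int h_{1}^{2}\, d\xi$ and using $S^{f}_{T_{x_{i}}} = (\int f^{2} - f_{0}^{2})^{-1}(\int f^{2} - \int F^{2})$ produces \eqref{eq:STf_case4_th} with the asserted $\gamma_{f, h_{1}, h_{2}}$ from \eqref{gamma_f_h_varphi_def}.

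For the inequality, I would observe that $S^{g}_{T_{x_{i}}} \le S^{f}_{T_{x_{i}}}$ is equivalent to $\gamma_{f, h_{1}, h_{2}} \le 1$, and, since $\int h_{1}^{2}\, d\xi > 0$, to $\Var(g) \ge \Var(f)\int h_{1}^{2}\, d\xi$. Subtracting the two sides gives
\begin{equation*}
\Var(g) - \Var(f)\int h_{1}^{2}\, d\xi = f_{0}^{2}\,\Var(h_{1}) + \Var(h_{2}) + 2 (h_{1})_{0}\big((f h_{2})_{0} - f_{0}(h_{2})_{0}\big).
\end{equation*}
The first two terms are $\ge 0$ by Cauchy–Schwarz, exactly as in the earlier proofs, and the third term is $2 (h_{1})_{0}\Cov(f, h_{2})$, which is $\ge 0$ precisely by hypothesis \eqref{covariance_f_varphi}; hence $\gamma_{f, h_{1}, h_{2}} \le 1$.

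The work here is almost entirely bookkeeping: the main thing to get right is which of the variables $x_{i}$, $x_{\sim i}$, $\xi$, $\eta$ each of $f$, $h_{1}$, $h_{2}$ depends on, so that the integrals factor as claimed. The one genuinely structural point — and the place where the form of the decomposition is used — is the cancellation in the numerator, which rests on $\int F h_{2}\, dx_{\sim i} d\eta = (f h_{2})_{0}$, valid only because $h_{2}$ is assumed not to depend on $x_{i}$. Finally, the extra assumption \eqref{covariance_f_varphi} enters solely to fix the sign of the covariance cross-term in the displayed difference; the other two contributions are automatically non-negative, so without \eqref{covariance_f_varphi} one still obtains \eqref{eq:STf_case4_th} but not necessarily the upper bound $S^{g}_{T_{x_{i}}} \le S^{f}_{T_{x_{i}}}$.
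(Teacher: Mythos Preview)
Your proof is correct and follows essentially the same approach as the paper: expand the squares in the numerator and denominator of \eqref{eq:STg}, observe that the cross terms and $h_{2}$-terms in the numerator cancel (exactly because $h_{2}$ is independent of $x_{i}$), divide through by $\int h_{1}^{2}\,d\xi$, and then bound $\gamma_{f,h_{1},h_{2}}\le 1$ via the Cauchy--Schwarz inequalities $(h_{1})_{0}^{2}\le\int h_{1}^{2}$, $(h_{2})_{0}^{2}\le\int h_{2}^{2}$ together with assumption \eqref{covariance_f_varphi}. Your recasting of the inequality as $\Var(g)\ge\Var(f)\int h_{1}^{2}\,d\xi$ and the explicit identification of the difference as $f_{0}^{2}\Var(h_{1})+\Var(h_{2})+2(h_{1})_{0}\Cov(f,h_{2})$ is a slightly tidier packaging of the same argument.
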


\begin{proof}
We compute
\begin{align*}
\begin{split}
\Var(g)_{T_{x_i}} =& \int \int f^2(x) h_{1}^2(\xi) dx d\xi + \int \int h_{2}^2(x_{\sim i}, \eta) dx_{\sim i} d\eta + 2 (h_{1})_0 (f h_{2})_{0} +  \\ & 
                    - \int \int \left(\int f(x) h_{1}(\xi) dx_i \right)^2 dx_{\sim i} d\xi + \\ &
                    - \int \int h_{2}^2(x_{\sim i}, \eta) dx_{\sim i} d\eta - 2 (h_{1})_0 (f h_{2})_{0}, \\
\Var(g) =& \int \int f^2(x) h_{1}^2(\xi) dx d\xi + \int \int h_{2}^2(x_{\sim i}, \eta) dx_{\sim i} d\eta + 2 (h_{1})_0 (f h_{2})_{0} + \\&
                    - h_{0}^{2} f_{0}^{2} - (h_{2})_{0}^{2} - 2 f_{0} (h_{1})_{0} (h_{2})_{0}.
\end{split} 
\end{align*}
Thus, the total SI of the input $x_i$ for the results of the model $g(x, \xi, \eta)$ is equal to
\begin{align*}
\begin{split}
  \label{eq:STf_case3}
  S^g_{T_{x_i}}  = &\frac{\Var(g)_{T_{x_i}}}{\Var(g)}
  \\ 
  = &  \frac{\int f^2(x) dx - \int (\int f(x) dx_i )^2 dx_{\sim i}} 
  {\int f^2(x) dx - \frac{f^2_0 (h_{1})^2_0}{\int h_{1}^2(\xi) d\xi}+ \frac{\int \int h_{2}^2(x_{\sim i}, \eta) dx_{\sim i} d\eta- (h_{2})_0^2}{\int h_{1}^2(\xi) d\xi} + 2 (h_{1})_0 \frac{ (f h_{2})_{0}-f_{0}(h_{2})_{0}}{\int h_{1}^2(\xi) d\xi}},
\end{split} 
\end{align*}
from which \eqref{eq:STf_case4_th} follows. By Cauchy-Schwarz inequality, we have
\begin{align*} (h_{1})^2_0 & \leq \int h_{1}^2(\xi) d\xi, \\
(h_{2})_{0}^{2} & \leq \int \int h_{2}^2(x_{\sim i}, \eta) dx_{\sim i} d\eta, \end{align*} 
which imply
\begin{equation*} \gamma_{f, h_{1}, h_{2}} \le \frac{\Var(f)} 
  {\Var(f) + 2 (h_{1})_0 \frac{ (f h_{2})_{0}-f_{0}(h_{2})_{0}}{\int h_{1}^2(\xi) d\xi}}. 
  \end{equation*}
To estimate the last term at the denominator, \eqref{covariance_f_varphi} is employed, yielding $$\gamma_{f, h_{1}, h_{2}} \le 1,$$ and the result follows.
\end{proof}

Note that, in the previous theorem, $h_{2}$ can be independent of more than one input $x_j$, however, it is crucial to assume the independence from the unimportant parameters which we want to exclude from uncertainty quantification.

\begin{remark}
It is noticed that condition \eqref{covariance_f_varphi} is equivalent to assume that $\E(h_{1}) \Cov(f, h_{2}) \ge 0$, since $\Cov(f, h_{2}) = (f h_{2})_{0} - f_{0} (h_{2})_{0}$.
Under the same assumption, one can get the following lower bound on $\gamma_{f, h_{1}, h_{2}}$:
\begin{equation*} \gamma_{f, h_{1}, h_{2}} \ge \frac{\int f^2(x) dx - f_{0}^{2}} 
  {\int f^2(x) dx + \frac{\int \int h_{2}^2(x_{\sim i}, \eta) dx_{\sim i} d\eta - (h_{2})_0^2}{\int h_{1}^2(\xi) d\xi} + 2 (h_{1})_0 \frac{ (f h_{2})_{0}-f_{0}(h_{2})_{0}}{\int h_{1}^2(\xi) d\xi}}. \end{equation*}
On the other hand, if $\E(h_{1}) \Cov(f, h_{2}) \le 0$; that is, $(h_{1})_{0}(f h_{2})_{0}  \le f_{0}(h_{1})_{0} (h_{2})_{0}$, then
\begin{equation*} \gamma_{f, h_{1}, h_{2}} \ge \frac{\int f^2(x) dx - f_{0}^{2}} 
  {\int f^2(x) dx + \frac{\int \int h_{2}^2(x_{\sim i}, \eta) dx_{\sim i} d\eta - (h_{2})_0^2}{\int h_{1}^2(\xi) d\xi}}. \end{equation*}
In addition, if it is assumed that $(h_{1})_{0} (f h_{2})_{0}  \le f_{0} (h_{1})_{0} (h_{2})_{0}$ and that $$\Var(f) + \frac{\Var(h_{2})}{\int h_{1}^{2}(\xi) d \xi} + \Cov(f, h_{2}) \ge 0,$$ we obtain the following upper bound for $\gamma_{f, h_{1}, h_{2}}$:
\begin{equation*} \gamma_{f, h_{1}, h_{2}} \le \frac{1} 
  {1 + \frac{\int \int h_{2}^2(x_{\sim i}, \eta) dx_{\sim i} d\eta - (h_{2})_0^2}{\Var(f) \int h_{1}^2(\xi) d\xi} + 2 (h_{1})_0 \frac{ (f h_{2})_{0}-f_{0}(h_{2})_{0}}{\Var(f) \int h_{1}^2(\xi) d\xi}}. \end{equation*}
\end{remark}

\section{Case 4}

A variant of the linear case $G(u, v) = u + v$ is considered. The difference with Case 2 of Theorem \ref{thm:case_2} is that now the functions $f$ and $h$ depend on the same set of variables.

\begin{theorem} \label{thm:case_4} Let $g: (0, 1)^{n} \to \R$ be a function in $L^{2}((0, 1)^{n})$ such that
\begin{equation*}
\label{eq:res_macro_t}
  g(x) = f(x) + h(x),
\end{equation*}
for some $f, h: (0, 1)^{n} \to \R$, $f, h \in L^{2}((0, 1)^{n})$. Then, if $$\Cov(f, h) = (f h)_{0} - f_{0} h_{0} \geq 0,$$ 
we have
\begin{equation} 
\label{eq:case_5_bound}
S^{g}_{T_{x_{i}}} \le \frac{ \left ( \sqrt{ S^{f}_{T_{x_{i}}} \Var(f)} + \sqrt{ S^{h}_{T_{x_{i}}} \Var(h)} \right )^{2}}{\Var(f) + \Var(h)},
\end{equation}
and so
\begin{equation} \label{eq:case_5_bound_easy} S^{g}_{T_{x_{i}}} \le 2 \max \{S^{f}_{T_{x_{i}}}, S^{h}_{T_{x_{i}}} \}, \end{equation}
where the factor $2$ is sharp.
\end{theorem}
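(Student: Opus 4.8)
The plan rests on recognising the numerator of a total Sobol index as the squared $L^{2}$-norm of a conditional residual, so that its square root satisfies Minkowski's inequality. Write $(Pw)(x) := \int_{0}^{1} w(x_{\sim i}, t)\, dt = \E_{x_{i}}[w \mid x_{\sim i}]$; by the law of total variance the numerator $\Var(w) - \Var_{x_{\sim i}}(\E_{x_{i}}[w \mid x_{\sim i}])$ of $S^{w}_{T_{x_{i}}}$ equals $\E[(w - Pw)^{2}]$, so setting $D(w) := S^{w}_{T_{x_{i}}}\Var(w)$ we have $D(w) = \E[(w - Pw)^{2}]$. Since $P$ is linear, $g - Pg = (f - Pf) + (h - Ph)$, and Minkowski's inequality in $L^{2}((0,1)^{n})$ gives
$$\sqrt{D(g)} \le \sqrt{D(f)} + \sqrt{D(h)} = \sqrt{S^{f}_{T_{x_{i}}}\Var(f)} + \sqrt{S^{h}_{T_{x_{i}}}\Var(h)}.$$
Squaring bounds the numerator of $S^{g}_{T_{x_{i}}}$ from above. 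For the denominator, the hypothesis $\Cov(f, h) \ge 0$ yields $\Var(g) = \Var(f) + \Var(h) + 2\Cov(f, h) \ge \Var(f) + \Var(h)$. Dividing the squared bound on $D(g)$ by $\Var(g)$, and then replacing $\Var(g)$ with the smaller quantity $\Var(f) + \Var(h)$, produces \eqref{eq:case_5_bound}.

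To pass to \eqref{eq:case_5_bound_easy}, I would set $a = \Var(f)$, $b = \Var(h)$ and $M = \max\{S^{f}_{T_{x_{i}}}, S^{h}_{T_{x_{i}}}\}$, and estimate $\bigl(\sqrt{S^{f}_{T_{x_{i}}}\,a} + \sqrt{S^{h}_{T_{x_{i}}}\,b}\bigr)^{2} \le M\bigl(\sqrt{a} + \sqrt{b}\bigr)^{2} = M\bigl(a + b + 2\sqrt{ab}\bigr) \le 2M(a + b)$, the last step being the AM-GM inequality $2\sqrt{ab}\le a+b$; dividing by $a + b$ finishes it.

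For sharpness of the constant $2$, I would exhibit an explicit equality case. Take $n = 2$, $i = 1$, $f(x_{1}, x_{2}) = (x_{1} - \tfrac12) + (x_{2} - \tfrac12)$ and $h(x_{1}, x_{2}) = (x_{1} - \tfrac12) - (x_{2} - \tfrac12)$, so that $g = f + h = 2x_{1} - 1$. A short computation gives $\Var(f) = \Var(h) = \tfrac16$ and $\Cov(f, h) = \Var(x_{1}) - \Var(x_{2}) = 0$ (so the hypothesis holds), $S^{f}_{T_{x_{1}}} = S^{h}_{T_{x_{1}}} = \tfrac12$, while $g$ depends only on $x_{1}$, whence $S^{g}_{T_{x_{1}}} = 1 = 2\max\{S^{f}_{T_{x_{1}}}, S^{h}_{T_{x_{1}}}\}$. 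The same pair in fact saturates \eqref{eq:case_5_bound}; the reason is that $f$ and $h$ differ only by a term independent of $x_{1}$, which is precisely the equality condition for the Minkowski inequality used above.

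The only delicate point is the identity $D(w) = \E[(w - Pw)^{2}]$, i.e.\ identifying the total-effect variance with a conditional-residual $L^{2}$-norm; once that is in hand the argument is just Minkowski's inequality, one application of AM-GM, and an explicit computation. One could instead expand $\int f^{2}\, dx - \int (\int f\, dx_{i})^{2}\, dx_{\sim i}$ by hand and control the cross term with Cauchy-Schwarz, but that route is longer and obscures why $2$ is the right constant.
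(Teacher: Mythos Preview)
Your argument is correct and follows the same overall skeleton as the paper: bound the numerator of $S^{g}_{T_{x_{i}}}$ by controlling the cross term, bound the denominator below via $\Cov(f,h)\ge 0$, then apply AM--GM to pass from \eqref{eq:case_5_bound} to \eqref{eq:case_5_bound_easy}. The difference is packaging. The paper expands $\int(f+h)^{2}\,dx - \int(\int(f+h)\,dx_{i})^{2}\,dx_{\sim i}$ and applies Cauchy--Schwarz to $f-\int f\,dx_{i}$ and $h-\int h\,dx_{i}$ to bound the cross term; you instead recognise $D(w)=\|w-Pw\|_{L^{2}}^{2}$ and invoke Minkowski on $g-Pg=(f-Pf)+(h-Ph)$. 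Since Minkowski here is exactly Cauchy--Schwarz after squaring, the two routes are the same inequality viewed differently, but your formulation makes the subadditivity of $\sqrt{D(\cdot)}$ transparent and explains immediately why equality requires $f-Pf$ and $h-Ph$ to be proportional. Your sharpness argument is also stronger than the paper's: the paper only checks that the algebraic step $(\sqrt{ay}+\sqrt{bz})^{2}/(a+b)\le 2\max\{y,z\}$ is tight at $a=b$, $y=z$, without exhibiting functions $f,h$ that simultaneously saturate \eqref{eq:case_5_bound}; your explicit pair $f=(x_{1}-\tfrac12)+(x_{2}-\tfrac12)$, $h=(x_{1}-\tfrac12)-(x_{2}-\tfrac12)$ closes that gap.
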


\begin{proof}
By a simple computation, it follows that
\begin{align*}
\begin{split}
    S^{g}_{T_{x_{i}}} & = \frac{\int (f+h)^2 dx - \int (\int (f + h) dx_i)^2 dx_{\sim i}} {\int (f+h)^2 dx -(f_{0} + h_{0})^2}  
    \\ 
    \\
    & =\frac{\Var(f)_{T_{x_i}} + \Var(h)_{T_{x_i}} + 2 \int fh dx - 2 \int(\int f dx_i) (\int h dx_i) dx_{\sim i}}
    {\Var(f) + \Var(h) + 2\Cov(f, h)},
\end{split} 
\end{align*}
where $\Var(f)_{T_{x_{i}}} = \int f^{2}(x) \, dx - \int (\int f(x) \, d x_{i})^{2} \, d x_{\sim i}$, and $\Var(h)_{T_{x_{i}}}$ is defined analogously.
Then, by applying the Cauchy-Schwarz inequality to the functions $f(x) - \int f(x) \, d x_{i}$ and $h(x) - \int h(x) \, d x_{i}$, we get
\begin{equation} \label{Cov_x_i} \left | \int fh dx - \int \left (\int f dx_i \right ) \left ( \int h dx_i \right ) dx_{\sim i} \right | \le \sqrt{\Var(f)_{T_{x_i}} \Var(h)_{T_{x_i}} }. \end{equation}
Thus, if $\Cov(f, h) \geq 0$, by \eqref{Cov_x_i} we obtain
\begin{equation*} 
S^{g}_{T_{x_{i}}} \le \frac{ S^{f}_{T_{x_{i}}} \Var(f) + S^{h}_{T_{x_{i}}} \Var(h) + 2 \sqrt{ S^{f}_{T_{x_{i}}} \Var(f) S^{h}_{T_{x_{i}}} \Var(h)} }{\Var(f) + \Var(h)},
\end{equation*}
from which \eqref{eq:case_5_bound} immediately follows. Finally, we show that
\begin{equation} \label{eq:algebraic_inequality} \frac{(\sqrt{a y} + \sqrt{b z})^{2}}{a + b} \le 2 \max \{y, z\}
\end{equation}
for any $a, b, y, z > 0$. Indeed, without loss of generality, let $y > z$, and recall that $2 \sqrt{ab} \le a + b$: then, 
\begin{equation*} \frac{(\sqrt{a y} + \sqrt{b z})^{2}}{a + b} \le y \frac{(\sqrt{a} + \sqrt{b})^{2}}{a + b} = y \frac{a + b + 2 \sqrt{ab}}{a + b} \le 2 y.
\end{equation*}
Moreover, the factor $2$ is sharp: if $y = z$ and $a = b$, 
\begin{equation*} \frac{(\sqrt{a y} + \sqrt{b z})^{2}}{a + b} = y \frac{4 a}{2a} = 2 y = 2 \max\{y, z\}.
\end{equation*}
Therefore, inequality \eqref{eq:algebraic_inequality} shows that \eqref{eq:case_5_bound} implies \eqref{eq:case_5_bound_easy}.
\end{proof}

The bound given by \eqref{eq:case_5_bound} means that the total sensitivity index $S^{g}_{T_{x_{i}}}$ for the function $g$ of the input $x_i$ is controlled by $S^{f}_{T_{x_{i}}}$ and $S^{h}_{T_{x_{i}}}$. 
It is clear that this result can be applied also to a function $g$ of the form 
\begin{equation*} g(x) = f(x) + h_{1}(x) + h_{2}(x) + \dots + h_{k}(x),
\end{equation*}
for any $k \ge 1$. Indeed, it is enough to proceed by iteration: at first, we let $$h(x) = h_{1}(x) + h_{2}(x) + \dots + h_{k}(x),$$
then \eqref{eq:case_5_bound} is applied to $S^{h}_{T_{x_{i}}}$, by seeing $h$ as $$h(x) = h_{1}(x) + \tilde{h}_{2}(x),$$
where
$$\tilde{h}_{2}(x) = h_{2}(x) + \dots + h_{k}(x).$$
By applying this procedure $k$ times, the desired result is obtained. 
However, since the factor $2$ in \eqref{eq:case_5_bound_easy} is sharp, in general we cannot hope to obtain a better control than
\begin{equation*} \label{eq:case_5_bound_easy_iterated} S^{g}_{T_{x_{i}}} \le 2^{k} \max \{S^{f}_{T_{x_{i}}}, S^{h_{1}}_{T_{x_{i}}}, S^{h_{2}}_{T_{x_{i}}}, \dots, S^{h_{k}}_{T_{x_{i}}} \},
\end{equation*}
where the factor $2^{k}$ is again sharp.

\section{Case 5}

A variant of Case 3 (Theorem \ref{thm:case_3}), $G(u, v_{1}, v_{2}) = uv_{1} + v_{2}$ is considered.  This time, we assume dependence of $h_{2}$ also on the input $x_{i}$.
 
\begin{theorem} \label{thm:case_5} Let $g : (0, 1)^{n + m + k} \to \R$ be a function in $L^{2}((0, 1)^{n + m + k})$ such that
\begin{align*}
\begin{split}
  g(x, \xi, \eta) = f(x) h_{1}(\xi) + h_{2}(x,\eta)
\end{split} 
\end{align*}
for some $f \in L^{2}((0, 1)^{n}), h_{1} \in L^{2}((0, 1)^{m})$ and $h_{2} \in L^{2}((0, 1)^{n + k})$. Then, if $\E(h_{1}) \Cov(f, h_{2}) \ge 0$, 
\begin{equation} \label{eq:STf_case4} S^{g}_{T_{x_{i}}} \le  \frac{  S^{f}_{T_{x_{i}}} (\int h_{1}^2(\xi) \, d\xi ) \Var(f)  + S^{h_{2}}_{T_{x_{i}}} \Var(h_{2})  + 2 (h_{1})_{0} \sqrt{S^{f}_{T_{x_{i}}} \Var(f) S^{h_{2}}_{T_{x_{i}}} \Var(h_{2})}}
    {( \int h_{1}^2(\xi) \, d\xi ) \Var(f) + f_{0}^{2} \Var(h_{1}) + \Var(h_{2})}. \end{equation}
\end{theorem}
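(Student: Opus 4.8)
The plan is to compute the total variance $\Var(g)$ and the conditional‑variance numerator $\Var(g)_{T_{x_i}}$ directly, by expanding $g^{2} = f^{2}h_{1}^{2} + 2fh_{1}h_{2} + h_{2}^{2}$ and integrating term by term, using that $x$, $\xi$ and $\eta$ are independent and uniformly distributed. This is the same strategy as in the proofs of Theorem~\ref{thm:case_3} and Theorem~\ref{thm:case_4}, and the two main inputs will be: a Cauchy--Schwarz estimate for the cross terms, and the sign hypothesis $\E(h_{1})\Cov(f,h_{2})\ge 0$ used to discard one term in the denominator.

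First I would handle the denominator. Since $g_{0} = f_{0}(h_{1})_{0} + (h_{2})_{0}$ and $\int g^{2} = \big(\int h_{1}^{2}(\xi)\,d\xi\big)\big(\int f^{2}(x)\,dx\big) + 2(h_{1})_{0}(fh_{2})_{0} + \int h_{2}^{2}(x,\eta)\,dx\,d\eta$, regrouping exactly as in the proof of Theorem~\ref{thm:case_3} gives
\[ \Var(g) = \Big(\int h_{1}^{2}(\xi)\,d\xi\Big)\Var(f) + f_{0}^{2}\,\Var(h_{1}) + \Var(h_{2}) + 2(h_{1})_{0}\Cov(f,h_{2}). \]
Because $\Cov(f,h_{2}) = (fh_{2})_{0} - f_{0}(h_{2})_{0}$, the hypothesis $\E(h_{1})\Cov(f,h_{2})\ge 0$ states precisely that the last summand is non‑negative; dropping it yields the lower bound on $\Var(g)$ that appears as the denominator of \eqref{eq:STf_case4} (and this denominator is strictly positive, recalling that $f$ is assumed non‑constant).

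Next I would handle the numerator. Writing $\int g\,dx_{i} = h_{1}(\xi)\int f\,dx_{i} + \int h_{2}\,dx_{i}$ and integrating its square over $x_{\sim i},\xi,\eta$, the same bookkeeping produces
\[ \Var(g)_{T_{x_{i}}} = \Big(\int h_{1}^{2}(\xi)\,d\xi\Big)\Var(f)_{T_{x_{i}}} + \Var(h_{2})_{T_{x_{i}}} + 2(h_{1})_{0}\Big[(fh_{2})_{0} - \int\Big(\int f\,dx_{i}\Big)\Big(\int h_{2}\,dx_{i}\Big)dx_{\sim i}\,d\eta\Big], \]
where $\Var(f)_{T_{x_{i}}} = S^{f}_{T_{x_{i}}}\Var(f)$ and $\Var(h_{2})_{T_{x_{i}}} = S^{h_{2}}_{T_{x_{i}}}\Var(h_{2})$. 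The bracketed cross term is controlled exactly as in Theorem~\ref{thm:case_4}: applying Cauchy--Schwarz in the variable $x_{i}$ to $f - \int f\,dx_{i}$ and $h_{2} - \int h_{2}\,dx_{i}$ and then integrating the remaining variables (the extra variable $\eta$ costs nothing, since $\Var_{x_{i}}(f)$ does not depend on it) gives the analogue of \eqref{Cov_x_i},
\[ \Big|(fh_{2})_{0} - \int\Big(\int f\,dx_{i}\Big)\Big(\int h_{2}\,dx_{i}\Big)dx_{\sim i}\,d\eta\Big| \le \sqrt{\Var(f)_{T_{x_{i}}}\Var(h_{2})_{T_{x_{i}}}} = \sqrt{S^{f}_{T_{x_{i}}}\Var(f)\,S^{h_{2}}_{T_{x_{i}}}\Var(h_{2})}. \]
Invoking $(h_{1})_{0}\ge 0$ then turns the numerator identity into the upper bound that appears as the numerator of \eqref{eq:STf_case4}. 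Combining the two estimates, $S^{g}_{T_{x_{i}}} = \Var(g)_{T_{x_{i}}}/\Var(g)$ is at most the ratio of (numerator bound) to (denominator bound), which is exactly \eqref{eq:STf_case4}.

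The hard part is almost entirely careful bookkeeping in the term‑by‑term expansion, together with checking that the two one‑sided estimates (dropping the covariance term in $\Var(g)$, bounding the cross term in $\Var(g)_{T_{x_{i}}}$ from above) really push the fraction in the direction of the claimed inequality. The one genuinely delicate point is verifying that the Cauchy--Schwarz step of Theorem~\ref{thm:case_4} transfers verbatim once $h_{2}$ also depends on $\eta$; and, if $h_{1}$ is allowed to change sign, keeping track of $|(h_{1})_{0}|$ versus $(h_{1})_{0}$, so that the statement is to be read with $(h_{1})_{0}\ge 0$ (which, under the sign hypothesis, is the relevant case).
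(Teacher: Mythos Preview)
Your proposal is correct and follows essentially the same route as the paper: expand $g^{2}$ term by term, use the hypothesis $\E(h_{1})\Cov(f,h_{2})\ge 0$ to drop the covariance term from $\Var(g)$, and bound the cross term in the numerator via the Cauchy--Schwarz inequality \eqref{Cov_x_i}. Your closing caveat about needing $(h_{1})_{0}\ge 0$ to pass from $2(h_{1})_{0}[\cdot]$ to $2(h_{1})_{0}\sqrt{\cdots}$ is well taken; the paper's proof makes the same implicit assumption when it applies \eqref{Cov_x_i}.
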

\begin{proof} It is enough to evaluate $S^{g}_{T_{x_{i}}}$. We have
\begin{align*}
& \int g^2(x, \xi, \eta)  \, dx \, d\xi \, d \eta - \int \left (\int g(x, \xi, \eta) \, dx_i \right )^2 \, dx_{\sim i} \, d\xi \, d \eta \\
& = \int \int (f(x) h_{1}(\xi) + h_{2}(x, \eta))^2 \, dx \, d\xi \, d \eta + \\
& - \int\int \left (\int (f(x) h_{1}(\xi) + h_{2}(x, \eta)) \, dx_i \right )^2 \, dx_{\sim i} \, d\xi \, d \eta \\
    & = \left ( \int h_{1}^2(\xi) \, d\xi \right ) \Var(f)_{T_{x_i}} +  \Var(h_{2})_{T_{x_i}} +\\
    & + 2 (h_{1})_{0} \left ( \int f(x) h_{2}(x, \eta) \, dx \, d \eta - 2 \int \left (\int f(x) \, dx_i \right) \left (\int h_{2}(x, \eta) \, dx_i \right ) \, dx_{\sim i} \, d \eta \right ) \\
    & \le \left ( \int h_{1}^2(\xi) \, d\xi \right ) \Var(f)_{T_{x_i}} +  \Var(h_{2})_{T_{x_i}} + 2 (h_{1})_{0} \sqrt{\Var(f)_{T_{x_{i}}} \Var(h_{2})_{T_{x_{i}}}},
\end{align*}
by \eqref{Cov_x_i}.
On the other hand, we get
\begin{align*}
\int \int g^{2}(x, \xi, \eta) \, dx \, d \xi \, d \eta - g_{0}^{2} & = \int \int (f(x) h_{1}(\xi) + h(x, \eta))^2 \, dx \, d\xi \, d \eta -(f h_{1} + h_{2})^2_0 \\
& = \left ( \int h_{1}^2 \, d\xi \right ) \Var(f) + f_{0}^{2} \Var(h_{1}) + \Var(h_{2}) + \\
& + 2 (h_{1})_{0} \Cov(f, h_{2}). \\
& \ge \left ( \int h_{1}^2 \, d\xi \right ) \Var(f) + f_{0}^{2} \Var(h_{1}) + \Var(h_{2}),
\end{align*}
since $(h_{1})_{0} \Cov(f, h_{2}) \ge 0$. Then, it follows that 
\begin{align*}
\begin{split}
    S^{g}_{T_{x_{i}}} & \le \frac{\left ( \int h_{1}^2(\xi) \, d\xi \right ) \Var(f)_{T_{x_i}} +  \Var(h_{2})_{T_{x_i}} + 2 (h_{1})_{0} \sqrt{\Var(f)_{T_{x_{i}}} \Var(h_{2})_{T_{x_{i}}}}}{\left ( \int h_{1}^2 \, d\xi \right ) \Var(f) + f_{0}^{2} \Var(h_{1}) + \Var(h_{2})} \\
    & = \frac{ S^{f}_{T_{x_{i}}} (\int h_{1}^2 \, d\xi ) \Var(f)  + S^{h_{2}}_{T_{x_{i}}} \Var(h_{2})  + 2 (h_{1})_{0} \sqrt{S^{f}_{T_{x_{i}}} \Var(f) S^{h_{2}}_{T_{x_{i}}} \Var(h_{2})}}{( \int h_{1}^2 \, d\xi ) \Var(f) + f_{0}^{2} \Var(h_{1}) + \Var(h_{2})},
\end{split} 
\end{align*}
which is \eqref{eq:STf_case4}.
\end{proof}

If $h_{1} \equiv 1$ and $k = 0$, there is no dependence on $\xi$ and $\eta$, and Theorem \ref{thm:case_5} implies Theorem \ref{thm:case_4} for the functions $f$ and $h_{2}$.

\section{An estimate on a general class of model functions}\label{sec:general_res}

Let $G : \R^{2} \to \R$ be such that there exist $L \ge c > 0$ satisfying
\begin{align} \label{Lipschitz} |G(u, v) - G(u_{0}, v)| & \le L |u - u_{0}|, \\
\label{coercive} |G(u, v)| & \ge c \sqrt{u^{2} + v^{2} } \end{align}
for any $u, u_{0}, v \in \R$, which means that $G$ is Lipschitz in $u$, uniformly in $v$, and that it is a coercive function.

\begin{theorem} \label{general_case_thm} Let $g: (0, 1)^{n + m} \to \R$ be a function in $L^{2}((0, 1)^{n + m})$ such that $g_{0} = 0$ and  
\begin{equation} \label{g_function_form} g(x, \xi) = G(f(x), h(x_{\sim i}, \xi)) \end{equation}
for some functions $f : (0, 1)^{n} \to \R$ and $h : (0, 1)^{n + m - 1} \to \R$ satisfying $f \in L^{2}((0, 1)^{n})$ and $h \in L^{2}((0, 1)^{n + m -1})$. Then,
\begin{equation} \label{general_upper_bound} S_{T_{x_{i}}}^{g} \le 2 \frac{L^{2}}{c^{2}} \frac{\Var(f)}{\Var(f) + \Var(h)} S_{T_{x_{i}}}^{f}. \end{equation}
\end{theorem}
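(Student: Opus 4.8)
The plan is to bound the numerator and the denominator of $S^{g}_{T_{x_{i}}}$ separately, exploiting that, by \eqref{g_function_form}, the variable $x_{i}$ enters $g$ only through the first argument of $G$ — the function $h$ being independent of $x_{i}$. Write $\bar{g}(x_{\sim i}, \xi) := \int g(x, \xi)\, dx_{i}$, which is the conditional mean of $g$ given $(x_{\sim i}, \xi)$. As in the second line of \eqref{eq:STg}, the numerator of $S^{g}_{T_{x_{i}}}$ equals
$$\int |g|^{2}\, dx\, d\xi - \int \Big| \int g\, dx_{i} \Big|^{2}\, dx_{\sim i}\, d\xi = \int |g - \bar{g}|^{2}\, dx\, d\xi = \int_{(0,1)^{n + m - 1}} \Var_{x_{i}}\big( g(\,\cdot\,, x_{\sim i}, \xi) \big)\, dx_{\sim i}\, d\xi,$$
the middle equality being the usual orthogonality of $g - \bar g$ to functions of $(x_{\sim i}, \xi)$.

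Next I would estimate the inner one–dimensional variance. Fix $(x_{\sim i}, \xi)$, set $h^{*} := h(x_{\sim i}, \xi)$ and $\bar{f}(x_{\sim i}) := \int f(x_{i}, x_{\sim i})\, dx_{i}$, and use the elementary fact $\Var(Y) \le \E[(Y - a)^{2}]$ with the constant (in $x_{i}$) $a = G(\bar{f}(x_{\sim i}), h^{*})$, followed by the Lipschitz hypothesis \eqref{Lipschitz}:
$$\Var_{x_{i}}\big( G(f(\,\cdot\,), h^{*}) \big) \le \int \big| G(f(x_{i}, x_{\sim i}), h^{*}) - G(\bar{f}(x_{\sim i}), h^{*}) \big|^{2}\, dx_{i} \le L^{2} \int |f(x_{i}, x_{\sim i}) - \bar{f}(x_{\sim i})|^{2}\, dx_{i} = L^{2}\, \Var_{x_{i}}(f).$$
Integrating over $(x_{\sim i}, \xi)$ and recalling that $\int \Var_{x_{i}}(f)\, dx_{\sim i} = \Var(f)_{T_{x_{i}}} = S^{f}_{T_{x_{i}}}\, \Var(f)$ (the $\xi$–integral being trivial since $f$ does not depend on $\xi$), the numerator of $S^{g}_{T_{x_{i}}}$ is at most $L^{2}\, \Var(f)\, S^{f}_{T_{x_{i}}}$. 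For the denominator I would use the hypothesis $g_{0} = 0$, so that $\Var(g) = \int |g|^{2}\, dx\, d\xi$, and then apply coercivity \eqref{coercive} pointwise, $|g(x,\xi)|^{2} = |G(f(x), h(x_{\sim i}, \xi))|^{2} \ge c^{2}\big( f(x)^{2} + h(x_{\sim i}, \xi)^{2} \big)$; integrating and using $\int f^{2}\, dx = \Var(f) + f_{0}^{2} \ge \Var(f)$ and $\int h^{2}\, dx_{\sim i}\, d\xi = \Var(h) + h_{0}^{2} \ge \Var(h)$ gives $\Var(g) \ge c^{2}\big( \Var(f) + \Var(h) \big)$.

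Combining the two estimates,
$$S^{g}_{T_{x_{i}}} = \frac{\Var(g)_{T_{x_{i}}}}{\Var(g)} \le \frac{L^{2}\, \Var(f)\, S^{f}_{T_{x_{i}}}}{c^{2}\big( \Var(f) + \Var(h) \big)},$$
which is in fact marginally stronger than \eqref{general_upper_bound}, so the theorem follows a fortiori (the factor $2$ leaves room to spare). I do not expect a serious obstacle here; the only point needing care is the opening reduction — one must use that $h$ is independent of $x_{i}$ so that the $x_{i}$–variance of $g$ collapses to a one–dimensional variance/Poincaré–type inequality for $G(\,\cdot\,, h^{*})$ composed with $f$ — together with the passage from the fibrewise bound back to the global one, which is just Fubini and is justified by the $L^{2}$ hypotheses on $f$, $h$, and $g$.
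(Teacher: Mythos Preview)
Your proof is correct and in fact yields a sharper bound than the paper's. The denominator estimate is identical in both arguments (coercivity plus $g_{0} = 0$). The difference lies in the numerator. The paper bounds $\int (g - \bar g)^{2}$ via Jensen's inequality by the symmetrised double integral $\int\!\!\int |g(x_{i},x_{\sim i},\xi) - g(\tilde x_{i},x_{\sim i},\xi)|^{2}\, d\tilde x_{i}\, dx_{i}\, dx_{\sim i}\, d\xi$, then applies the Lipschitz condition in the first argument of $G$, and finally expands the resulting double integral in $f$ to obtain $2L^{2}\,\Var(f)_{T_{x_{i}}}$; the symmetrisation step is exactly what produces the factor $2$. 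You instead exploit the variational characterisation $\Var_{x_{i}}(g) \le \E_{x_{i}}[(g-a)^{2}]$ with the clever choice $a = G(\bar f(x_{\sim i}), h^{*})$, which is constant in $x_{i}$ precisely because $h$ does not depend on $x_{i}$; the Lipschitz bound then applies directly, giving $\Var_{x_{i}}(g) \le L^{2}\,\Var_{x_{i}}(f)$ fibrewise and hence a numerator bound $L^{2}\,\Var(f)_{T_{x_{i}}}$ with constant $1$. Your route is slightly more elementary and delivers the stronger inequality $S^{g}_{T_{x_{i}}} \le (L^{2}/c^{2})\,\Var(f)\big/(\Var(f)+\Var(h))\, S^{f}_{T_{x_{i}}}$; the paper's independent-copy trick is a standard device that is sometimes unavoidable, but here it costs an unnecessary factor of $2$.
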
  
\begin{proof} By \eqref{Lipschitz} and \eqref{g_function_form}, it follows that $g(x, \xi) \in L^{2}((0,1)^{n + m})$. 
Since $g_{0} = 0$, \eqref{coercive} implies
\begin{align*} \Var(g) & = \int_{(0, 1)^{n + m}} g^{2}(x, \xi) \, dx \, d \xi \\
& \ge c^{2} \left ( \int_{(0, 1)^{n}} f^{2}(x) \, dx + \int_{(0, 1)^{n + m - 1}} h^{2}(x_{\sim i}, \xi) \, d x_{\sim i} \, d \xi \right ) \\
& \ge c^{2} \left ( \Var(f) + \Var(h)  \right ). \end{align*}
We further notice that, by Jensen inequality combined with \eqref{Lipschitz} and \eqref{g_function_form}, we get
\begin{align*} & \int_{(0, 1)^{n + m}} \left ( g(x_{i}, x_{\sim i}, \xi) - \int_{0}^{1} g(\tilde{x}_{i}, x_{\sim i}, \xi) \, d \tilde{x}_{i} \right )^{2} \, d x_{i} \, d x_{\sim i} \, d \xi \\
& \le \int_{(0, 1)^{n + m}} \int_{0}^{1} |g(x_{i}, x_{\sim i}, \xi) - g(\tilde{x}_{i}, x_{\sim i}, \xi)|^{2} \, d \tilde{x}_{i} \, d x_{i} \, d x_{\sim i} \, d \xi \\
& = \int_{(0, 1)^{n + m + 1}} |G(f(x_{i}, x_{\sim i}), h(x_{\sim i}, \xi)) - G(f(\tilde{x}_{i}, x_{\sim i}), h(x_{\sim i}, \xi))|^{2} \, d \tilde{x}_{i} \, d x_{i} \, d x_{\sim i} \, d \xi \\
& \le L^{2} \int_{(0, 1)^{n - 1}} \int_{0}^{1} \int_{0}^{1} |f(x_{i}, x_{\sim i}) - f(\tilde{x}_{i}, x_{\sim i})|^{2} \, d \tilde{x}_{i} \, d x_{i} \, d x_{\sim i} \\
& = 2 L^{2} \left ( \int_{(0, 1)^{n}} f^{2}(x) \, dx - \int_{(0, 1)^{n - 1}} \left ( \int_{0}^{1} f(x_{i}, x_{\sim i}) \, d x_{i} \right )^{2} \, d x_{\sim i} \right ). \end{align*}
Therefore, combining these two inequalities, \eqref{general_upper_bound} is obtained.
\end{proof}

We notice that we can replace the assumption $g_{0} = 0$ in Theorem \ref{general_case_thm} with a weaker one.

\begin{corollary} \label{general_case_cor} Let $g: (0, 1)^{m + n} \to \R$ be a function in $L^{2}((0, 1)^{m + n})$ as in \eqref{g_function_form}, with $f \in L^{2}((0, 1)^{n})$, $h \in L^{2}((0, 1)^{n + m -1})$. Then, if $$c^{2}(\Var(f) + \Var(h) ) - g_{0}^{2} \ge 0,$$  we have
\begin{equation*} \label{general_upper_bound_cor} 
S_{T_{x_{i}}}^{g} \le 2 \frac{L^{2} \Var(f)}{c^{2}(\Var(f) + \Var(h)) - g_{0}^{2}} S_{T_{x_{i}}}^{f}. \end{equation*}
\end{corollary}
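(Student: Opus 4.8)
The plan is to reduce Corollary~\ref{general_case_cor} to Theorem~\ref{general_case_thm} by inspecting which parts of that proof actually use the hypothesis $g_0 = 0$. Looking back at the argument, the assumption $g_0 = 0$ enters in exactly one place: the lower bound on the full variance, where coercivity~\eqref{coercive} was used to write $\Var(g) = \int g^2 \ge c^2(\Var(f) + \Var(h))$. The numerator estimate --- the bound on $\Var(g)_{T_{x_i}}$ by $2L^2\Var(f)_{T_{x_i}} = 2L^2 S^f_{T_{x_i}}\Var(f)$ via Jensen and Lipschitz~\eqref{Lipschitz} --- never referred to $g_0$ at all, since it only involves differences $g(x_i,\cdot) - g(\tilde x_i,\cdot)$, and the additive constant cancels.

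So the steps are: first, recall from the proof of Theorem~\ref{general_case_thm} the numerator bound
\begin{equation*}
\int \left( g(x_i, x_{\sim i}, \xi) - \int_0^1 g(\tilde x_i, x_{\sim i}, \xi)\, d\tilde x_i \right)^2 dx\, d\xi \le 2 L^2 \Var(f)_{T_{x_i}},
\end{equation*}
which holds verbatim with no assumption on $g_0$. Second, for the denominator, use coercivity exactly as before to get $\int g^2(x,\xi)\, dx\, d\xi \ge c^2(\Var(f) + \Var(h))$, and then subtract $g_0^2$ to obtain
\begin{equation*}
\Var(g) = \int g^2(x,\xi)\, dx\, d\xi - g_0^2 \ge c^2(\Var(f) + \Var(h)) - g_0^2.
\end{equation*}
Third, observe that the hypothesis $c^2(\Var(f) + \Var(h)) - g_0^2 \ge 0$ guarantees this lower bound is nonnegative, so dividing the numerator bound by it is legitimate and preserves the inequality direction; combining the two estimates yields
\begin{equation*}
S^g_{T_{x_i}} = \frac{\Var(g)_{T_{x_i}}}{\Var(g)} \le \frac{2 L^2 \Var(f)_{T_{x_i}}}{c^2(\Var(f) + \Var(h)) - g_0^2} = \frac{2 L^2 \Var(f)}{c^2(\Var(f) + \Var(h)) - g_0^2}\, S^f_{T_{x_i}},
\end{equation*}
which is the claimed bound.

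There is essentially no main obstacle here: the corollary is a routine bookkeeping variant of the theorem. The only point requiring a moment's care is checking that the denominator $\Var(g)$ is actually positive (not merely nonnegative) so that $S^g_{T_{x_i}}$ is well-defined --- but since $f$ is assumed non-constant with nonzero variance and $G$ is coercive, $\Var(g) > 0$, and if one wanted to be fully careful one could note the degenerate case $c^2(\Var(f)+\Var(h)) - g_0^2 = 0$ forces $\Var(g)$ to be sandwiched, which can be handled separately or excluded by a strict inequality hypothesis. I would simply present the two-inequality combination and remark that the sign condition is exactly what makes the division valid.
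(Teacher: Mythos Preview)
Your proposal is correct and matches the paper's own proof, which simply states that the argument is the same as in Theorem~\ref{general_case_thm} except that one subtracts $g_{0}^{2}$ in the denominator. You have spelled out in more detail exactly what the paper leaves implicit.
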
  
\begin{proof} The proof is the same of Theorem \ref{general_case_thm}, one needs just to subtract the term $g_{0}^{2}$ at the denominator.
\end{proof}

\begin{remark} 
It is not difficult to see that we could restate Theorem \ref{general_case_thm} and Corollary \ref{general_case_cor} for a function $G : \R \times \R^{l} \to \R$; that is, allowing $v$ to be a vector $(v_{1}, v_{2}, \dots, v_{l})$ in $\R^{l}$. The Lipschitz condition would not change, while the coercivity condition \eqref{coercive} would become
$$ |G(u, v)| \ge c \sqrt{u^{2} + |v|^{2}} = c \sqrt{u^{2} + v_{1}^{2} + v_{2}^{2} + \dots + v_{l}^{2}}.$$
This would allow to have not only one function $h$, but a family of $l$ different functions $h_{1}, h_{2}, \dots , h_{l}$, which could be seen as a vector valued function 
$$ h = (h_{1}, h_{2}, \dots, h_{l}) : (0, 1)^{n + m - 1} \to \R^{l},$$ 
satisfying 
$$\Var(h) = \Var(h_{1}) + \Var(h_{2}) + \dots + \Var(h_{l}).$$
\end{remark}

The next example illustrates that the admissible function $G$ for Theorem \ref{general_case_thm} and Corollary \ref{general_case_cor} can be very nonlinear.

\begin{example}
Let 
$$G(u, v) = a |u| + b |v| + \arctan{\left ( \frac{|u| + |v|}{1 + v^{2}} \right )},$$
for some $a, b > 0$. Then, $G$ is Lipschitz in $u$ uniformly in $v$, since
\begin{equation*}
    \frac{\partial G (u, v)}{\partial u} = \left ( a + \frac{(1 + v^{2})}{(1 + v^{2})^{2} + u^{2} + v^{2} + 2 |u||v|} \right ) {\rm sgn}(u),
\end{equation*}
which is a bounded function. Thus, $G$ satisfies condition \eqref{Lipschitz}, with
$$L = \sup_{u, v} \left | \frac{\partial G (u, v)}{\partial u} \right |.$$ 
As for the coercivity condition \eqref{coercive}, it is easy to see that $G(u, v) \ge 0$ and
\begin{equation*}
    G(u, v) \ge \min\{a, b\}(|u| + |v|) \ge \min\{a, b\} \sqrt{u^{2} + v^{2}},
\end{equation*}
so that we have $c = \min\{a, b\}$.
It is clear that, since $G(u, v) \ge 0$, any $g(x, \xi) = G(f(x), h(x_{\sim i}, \xi))$ cannot satisfy $g_{0} = 0$, unless $f = h = 0$. Hence, in general, we can apply Corollary \ref{general_case_cor} only if we ensure that
$$\min\{a, b\}^{2}(\Var(f) + \Var(h) ) \ge  g_{0}^{2}.$$
\end{example}


\bibliographystyle{elsarticle-num}
\bibliography{mybibfileUQSA}{}

\begin{thebibliography}{10}
\expandafter\ifx\csname url\endcsname\relax
  \def\url#1{\texttt{#1}}\fi
\expandafter\ifx\csname urlprefix\endcsname\relax\def\urlprefix{URL }\fi
\expandafter\ifx\csname href\endcsname\relax
  \def\href#1#2{#2} \def\path#1{#1}\fi

\bibitem{ROY20112131}
C.~J. Roy, W.~L. Oberkampf,
  \href{http://www.sciencedirect.com/science/article/pii/S0045782511001290}{A
  comprehensive framework for verification, validation, and uncertainty
  quantification in scientific computing}, Computer Methods in Applied
  Mechanics and Engineering 200~(25) (2011) 2131 -- 2144.
\newblock \href {http://dx.doi.org/https://doi.org/10.1016/j.cma.2011.03.016}
  {\path{doi:https://doi.org/10.1016/j.cma.2011.03.016}}.
\newline\urlprefix\url{http://www.sciencedirect.com/science/article/pii/S0045782511001290}

\bibitem{UUSITALO201524}
L.~Uusitalo, A.~Lehikoinen, I.~Helle, K.~Myrberg,
  \href{http://www.sciencedirect.com/science/article/pii/S1364815214002813}{An
  overview of methods to evaluate uncertainty of deterministic models in
  decision support}, Environmental Modelling \& Software 63 (2015) 24 -- 31.
\newblock \href
  {http://dx.doi.org/https://doi.org/10.1016/j.envsoft.2014.09.017}
  {\path{doi:https://doi.org/10.1016/j.envsoft.2014.09.017}}.
\newline\urlprefix\url{http://www.sciencedirect.com/science/article/pii/S1364815214002813}

\bibitem{Soize_2017}
C.~Soize, Uncertainty Quantification, Springer International Publishing, 2017.
\newblock \href {http://dx.doi.org/10.1007/978-3-319-54339-0}
  {\path{doi:10.1007/978-3-319-54339-0}}.

\bibitem{OBERKAMPF2002333}
W.~L. Oberkampf, S.~M. DeLand, B.~M. Rutherford, K.~V. Diegert, K.~F. Alvin,
  \href{http://www.sciencedirect.com/science/article/pii/S095183200100120X}{Error
  and uncertainty in modeling and simulation}, Reliability Engineering \&
  System Safety 75~(3) (2002) 333 -- 357.
\newblock \href
  {http://dx.doi.org/https://doi.org/10.1016/S0951-8320(01)00120-X}
  {\path{doi:https://doi.org/10.1016/S0951-8320(01)00120-X}}.
\newline\urlprefix\url{http://www.sciencedirect.com/science/article/pii/S095183200100120X}

\bibitem{URBINA20111114}
A.~Urbina, S.~Mahadevan, T.~L. Paez,
  \href{http://www.sciencedirect.com/science/article/pii/S0951832011000640}{Quantification
  of margins and uncertainties of complex systems in the presence of aleatoric
  and epistemic uncertainty}, Reliability Engineering \& System Safety 96~(9)
  (2011) 1114 -- 1125, quantification of Margins and Uncertainties.
\newblock \href {http://dx.doi.org/https://doi.org/10.1016/j.ress.2010.08.010}
  {\path{doi:https://doi.org/10.1016/j.ress.2010.08.010}}.
\newline\urlprefix\url{http://www.sciencedirect.com/science/article/pii/S0951832011000640}

\bibitem{Sob90}
I.~M. Sobol,
  \href{http://www.mathnet.ru/links/85fd3163f401599fc45a5ef55b48a1f5/mm2320.pdf}{On
  sensitivity estimation for nonlinear mathematical models}, Matematicheskoe
  Modelirovanie 2~(1) (1990) 112--118.
\newline\urlprefix\url{http://www.mathnet.ru/links/85fd3163f401599fc45a5ef55b48a1f5/mm2320.pdf}

\bibitem{Sob107}
I.~M. Sobol,
  \href{http://www.mathnet.ru/links/2bdf32407446c1047b5da243f4295e6e/mm1208.pdf}{Global
  sensitivity analysis indices for the investigation of nonlinear mathematical
  models}, Matematicheskoe Modelirovanie 19~(11) (2007) 23--24.
\newline\urlprefix\url{http://www.mathnet.ru/links/2bdf32407446c1047b5da243f4295e6e/mm1208.pdf}

\bibitem{sobol2001global}
I.~M. Sobol, {Global sensitivity indices for nonlinear mathematical models and
  their Monte Carlo estimates}, Mathematics and computers in simulation 55~(1)
  (2001) 271--280.
\newblock \href {http://dx.doi.org/10.1016/S0378-4754(00)00270-6}
  {\path{doi:10.1016/S0378-4754(00)00270-6}}.

\bibitem{SOBOL2007957}
I.~Sobol, S.~Tarantola, D.~Gatelli, S.~Kucherenko, W.~Mauntz, Estimating the
  approximation error when fixing unessential factors in global sensitivity
  analysis, Reliability Engineering \& System Safety 92~(7) (2007) 957 -- 960.
\newblock \href {http://dx.doi.org/https://doi.org/10.1016/j.ress.2006.07.001}
  {\path{doi:https://doi.org/10.1016/j.ress.2006.07.001}}.

\bibitem{SOBOL20093009}
I.~Sobol, S.~Kucherenko,
  \href{http://www.sciencedirect.com/science/article/pii/S0378475409000354}{Derivative
  based global sensitivity measures and their link with global sensitivity
  indices}, Mathematics and Computers in Simulation 79~(10) (2009) 3009 --
  3017.
\newblock \href
  {http://dx.doi.org/https://doi.org/10.1016/j.matcom.2009.01.023}
  {\path{doi:https://doi.org/10.1016/j.matcom.2009.01.023}}.
\newline\urlprefix\url{http://www.sciencedirect.com/science/article/pii/S0378475409000354}

\bibitem{KUCHERENKO2009}
S.~Kucherenko, M.~Rodriguez-Fernandez, C.~Pantelides, N.~Shah,
  \href{http://www.sciencedirect.com/science/article/pii/S095183200800152X}{{Monte
  Carlo evaluation of derivative-based global sensitivity measures}},
  Reliability Engineering \& System Safety 94~(7) (2009) 1135 -- 1148, {Special
  Issue on Sensitivity Analysis}.
\newblock \href {http://dx.doi.org/https://doi.org/10.1016/j.ress.2008.05.006}
  {\path{doi:https://doi.org/10.1016/j.ress.2008.05.006}}.
\newline\urlprefix\url{http://www.sciencedirect.com/science/article/pii/S095183200800152X}

\bibitem{saltelli2010avoid}
A.~Saltelli, P.~Annoni, How to avoid a perfunctory sensitivity analysis,
  Environmental Modelling \& Software 25~(12) (2010) 1508--1517.
\newblock \href {http://dx.doi.org/10.1016/j.envsoft.2010.04.012}
  {\path{doi:10.1016/j.envsoft.2010.04.012}}.

\bibitem{SALTELLI201929}
A.~Saltelli, K.~Aleksankina, W.~Becker, P.~Fennell, F.~Ferretti, N.~Holst,
  S.~Li, Q.~Wu,
  \href{http://www.sciencedirect.com/science/article/pii/S1364815218302822}{Why
  so many published sensitivity analyses are false: A systematic review of
  sensitivity analysis practices}, Environmental Modelling \& Software 114
  (2019) 29 -- 39.
\newblock \href
  {http://dx.doi.org/https://doi.org/10.1016/j.envsoft.2019.01.012}
  {\path{doi:https://doi.org/10.1016/j.envsoft.2019.01.012}}.
\newline\urlprefix\url{http://www.sciencedirect.com/science/article/pii/S1364815218302822}

\bibitem{BORGONOVO2016869}
E.~Borgonovo, E.~Plischke,
  \href{http://www.sciencedirect.com/science/article/pii/S0377221715005469}{Sensitivity
  analysis: A review of recent advances}, European Journal of Operational
  Research 248~(3) (2016) 869 -- 887.
\newblock \href {http://dx.doi.org/https://doi.org/10.1016/j.ejor.2015.06.032}
  {\path{doi:https://doi.org/10.1016/j.ejor.2015.06.032}}.
\newline\urlprefix\url{http://www.sciencedirect.com/science/article/pii/S0377221715005469}

\bibitem{KUCHERENKO201935}
S.~Kucherenko, S.~Song, L.~Wang,
  \href{http://www.sciencedirect.com/science/article/pii/S0951832016304574}{Quantile
  based global sensitivity measures}, Reliability Engineering \& System Safety
  185 (2019) 35 -- 48.
\newblock \href {http://dx.doi.org/https://doi.org/10.1016/j.ress.2018.12.001}
  {\path{doi:https://doi.org/10.1016/j.ress.2018.12.001}}.
\newline\urlprefix\url{http://www.sciencedirect.com/science/article/pii/S0951832016304574}

\bibitem{Borgdorff_2013}
J.~Borgdorff, J.-L. Falcone, E.~Lorenz, C.~Bona-Casas, B.~Chopard, A.~G.
  Hoekstra, Foundations of distributed multiscale computing: Formalization,
  specification, and analysis, Journal of Parallel and Distributed Computing
  73~(4) (2013) 465--483.
\newblock \href {http://dx.doi.org/10.1016/j.jpdc.2012.12.011}
  {\path{doi:10.1016/j.jpdc.2012.12.011}}.

\bibitem{borgdorff2014performance}
J.~Borgdorff, M.~B. Belgacem, C.~Bona-Casas, L.~Fazendeiro, D.~Groen,
  O.~Hoenen, A.~Mizeranschi, J.~Suter, D.~Coster, P.~Coveney, et~al.,
  Performance of distributed multiscale simulations, Phil. Trans. R. Soc. A
  372~(2021) (2014) 20130407.

\bibitem{Chopard_Falcone_Kunzli_Veen_Hoekstra_2018}
B.~Chopard, J.-L. Falcone, P.~Kunzli, L.~Veen, A.~Hoekstra, Multiscale
  modeling: recent progress and open questions, Multiscale and
  Multidisciplinary Modeling, Experiments and Design 1~(1) (2018) 57–68.
\newblock \href {http://dx.doi.org/10.1007/s41939-017-0006-4}
  {\path{doi:10.1007/s41939-017-0006-4}}.

\bibitem{Sal101}
A.~Saltelli, P.~Annoni, I.~Azzini, F.~Campolongo, M.~Ratto, S.~Tarantola,
  {Variance based sensitivity analysis of model output. Design and estimator
  for the total sensitivity index}, Computer Physics Communications 181~(2)
  (2010) 259--270.
\newblock \href {http://dx.doi.org/10.1016/j.cpc.2009.09.018}
  {\path{doi:10.1016/j.cpc.2009.09.018}}.

\bibitem{HOMMA19961}
T.~Homma, A.~Saltelli,
  \href{http://www.sciencedirect.com/science/article/pii/0951832096000026}{Importance
  measures in global sensitivity analysis of nonlinear models}, Reliability
  Engineering \& System Safety 52~(1) (1996) 1 -- 17.
\newblock \href
  {http://dx.doi.org/https://doi.org/10.1016/0951-8320(96)00002-6}
  {\path{doi:https://doi.org/10.1016/0951-8320(96)00002-6}}.
\newline\urlprefix\url{http://www.sciencedirect.com/science/article/pii/0951832096000026}

\bibitem{chopard2014framework}
B.~Chopard, J.~Borgdorff, A.~G. Hoekstra, A framework for multi-scale
  modelling, Philosophical Transactions of the Royal Society A: Mathematical,
  Physical and Engineering Sciences 372~(2021) (2014) 20130378.
\newblock \href {http://dx.doi.org/10.1098/rsta.2013.0378}
  {\path{doi:10.1098/rsta.2013.0378}}.

\bibitem{LIM1996287}
T.-S. Lim, W.-Y. Loh,
  \href{http://www.sciencedirect.com/science/article/pii/0167947395000542}{A
  comparison of tests of equality of variances}, Computational Statistics \&
  Data Analysis 22~(3) (1996) 287 -- 301.
\newblock \href
  {http://dx.doi.org/https://doi.org/10.1016/0167-9473(95)00054-2}
  {\path{doi:https://doi.org/10.1016/0167-9473(95)00054-2}}.
\newline\urlprefix\url{http://www.sciencedirect.com/science/article/pii/0167947395000542}

\bibitem{Weinan_2011}
W.~E, Principles of Multiscale Modeling, Cambridge University Press, 2011.

\bibitem{weinan2005analysis}
W.~E, D.~Liu, E.~Vanden-Eijnden, Analysis of multiscale methods for stochastic
  differential equations, Communications on Pure and Applied Mathematics
  58~(11) (2005) 1544--1585.

\bibitem{luk2019compat}
O.~Luk, O.~Hoenen, A.~Bottino, B.~Scott, D.~Coster,
  \href{http://www.sciencedirect.com/science/article/pii/S0010465519300074}{{ComPat
  framework for multiscale simulations applied to fusion plasmas}}, Computer
  Physics Communications\href
  {http://dx.doi.org/https://doi.org/10.1016/j.cpc.2018.12.021}
  {\path{doi:https://doi.org/10.1016/j.cpc.2018.12.021}}.
\newline\urlprefix\url{http://www.sciencedirect.com/science/article/pii/S0010465519300074}

\bibitem{Santiago2018}
A.~Santiago, J.~Aguado-Sierra, M.~Zavala-Ak{\'{e}}, R.~Doste-Beltran,
  S.~G{\'{o}}mez, R.~Ar{\'{\i}}s, J.~C. Cajas, E.~Casoni, M.~V{\'{a}}zquez,
  \href{https://doi.org/10.1002/cnm.3140}{Fully coupled
  fluid-electro-mechanical model of the human heart for supercomputers},
  International Journal for Numerical Methods in Biomedical Engineering 34~(12)
  (2018) e3140.
\newblock \href {http://dx.doi.org/10.1002/cnm.3140}
  {\path{doi:10.1002/cnm.3140}}.
\newline\urlprefix\url{https://doi.org/10.1002/cnm.3140}

\end{thebibliography}

\end{document}